\documentclass[11pt]{article}

\usepackage[utf8]{inputenc}

\usepackage[letterpaper,margin=1in]{geometry}
\usepackage{amsmath, amsthm}

\usepackage{amssymb}

\usepackage{complexity}

\usepackage{hyperref}
\hypersetup{colorlinks, linkcolor=darkblue, citecolor=darkgreen, urlcolor=darkblue}
\urlstyle{same}

\usepackage{times}

\usepackage[T1]{fontenc}
\usepackage{fix-cm}

\usepackage[inline]{enumitem}
\setlist[enumerate,1]{label=(\roman*), leftmargin=2.2em}
\setlist[enumerate,2]{label=(\alph*)}
\setlist{nosep,topsep=0.1em}
\setlist[itemize,1]{label={\bfseries--}}

\usepackage{mathtools}

\usepackage[linewidth=0.8pt]{mdframed}
\mdfsetup{skipabove=\bigskipamount, skipbelow=\bigskipamount, innerleftmargin=0.5em, innertopmargin=0.5em, innerrightmargin=0.5em, innerbottommargin=0.5em, align=center}

\usepackage{xcolor}
\definecolor{darkblue}{rgb}{0,0,0.38}
\definecolor{darkred}{rgb}{0.6,0,0}
\definecolor{darkgreen}{rgb}{0.1,0.35,0}
\definecolor{col1}{HTML}{31782F}
\colorlet{thmred}{red!90!brown!60!black}

\usepackage[
backend=biber,
style=alphabetic,
citestyle=alphabetic,
maxalphanames=6,
maxcitenames=99,
mincitenames=98,
maxbibnames=99,
giveninits=true,
url=false,
doi=true,
isbn=true,
backref=true
]{biblatex}



\usepackage{xpatch}

\makeatletter
\patchcmd\blx@bblinput{\blx@blxinit}
                      {\blx@blxinit
                      }{}{\fail}
\makeatother
\newcommand{\footremember}[2]{%
    \footnote{#2}
    \newcounter{#1}
    \setcounter{#1}{\value{footnote}}%
}
\newcommand{\footrecall}[1]{%
    \footnotemark[\value{#1}]%
}

\usepackage{xspace}

\makeatletter
 \newcommand{\linkdest}[1]{\Hy@raisedlink{\hypertarget{#1}{}}}
\makeatother
\newcommand{\OTSP}{\protect\hyperlink{prb:OTSP}{OTSP}\xspace}
\newcommand{\TSPPC}{\protect\hyperlink{prb:TSPPC}{TSP-PC}\xspace}
\newcommand{\TSP}{\protect\hyperlink{prb:TSP}{TSP}\xspace}

\newcommand{\Exp}{\mathbb{E}}
\newcommand{\Prob}{\mathbb{P}}
\newcommand{\odd}{\operatorname{odd}}
\newcommand{\e}{\mathrm{e}}
\newcommand{\PHK}{P_{\text{HK}}}

\newcommand{\Pstroll}[1][$s$-$t$]{P_\text{#1 stroll}}
\newcommand{\dotcup}{\mathbin{\dot{\cup}}}
\newcommand{\bigdotcup}{\mathbin{\dot{\bigcup}}}
\newcommand{\T}{\mathcal{T}}
\newcommand{\costLP}{c_{\text{LP}}}
\newcommand{\costOPT}{c_{\text{OPT}}}

\makeatletter
\DeclareFieldFormat{eprint:arxiv}{%
  arXiv\addcolon\space
  \ifhyperref
    {\href{https://arxiv.org/abs/#1}{%
       \nolinkurl{#1}%
       \iffieldundef{eprintclass}
     {}
     {\addspace\mkbibbrackets{\thefield{eprintclass}}}}}
    {\nolinkurl{#1}
     \iffieldundef{eprintclass}
       {}
       {\addspace\mkbibbrackets{\thefield{eprintclass}}}}}
\makeatother

\makeatletter
\newcommand{\labeltarget}[1]{\Hy@raisedlink{\hypertarget{#1}{}}}
\makeatother

\usepackage[margin=25pt,font=small,labelfont=bf]{caption}

\usepackage{subcaption}

\usepackage{graphicx}

\graphicspath{{.}{graphics/}}
\makeatletter
\newcommand\appendtographicspath[1]{%
  \g@addto@macro\Ginput@path{#1}%
}
\makeatother

\usepackage{tikz}
\usetikzlibrary{patterns}
\usetikzlibrary{decorations.pathmorphing}
\usetikzlibrary{calc}
\usetikzlibrary{backgrounds}
\usetikzlibrary{positioning}

\tikzset{terminal/.style={draw=black, fill=white, inner sep=1.3mm, very thick}}
\tikzset{inner/.style={circle, draw=black, fill=black, inner sep=.8mm, thick}}
\tikzset{edgeweight/.style={midway, rectangle, fill=white, inner sep=.8mm, opacity=0}}
\tikzset{full/.style={line width=2pt}}
\tikzset{three quarters/.style={line width=1.66pt}}
\tikzset{half/.style={loosely dashed, line width=1.33pt,}}
\tikzset{quarter/.style={loosely dotted, line width=1pt, line cap=butt}}
\tikzset{full node/.style={}}
\tikzset{three quarters node/.style={fill=black}}
\tikzset{half node/.style={pattern=north east lines}}
\tikzset{quarter node/.style={fill=white}}
\tikzset{edge/.style={thick, line cap=rect}}
\tikzstyle{join edge} = [draw, color=col1, dashed, ultra thick]
\tikzstyle{connector edge} = [draw, color=thmred, decorate, decoration={snake,amplitude=0.5mm,segment length=2mm}, ultra thick]
\tikzstyle{tree_edge} = [draw, black, ultra thick, line cap=rect]
\newcommand{\name}[1]{}

\makeatletter
\tikzset{%
   prefix node name/.code={%
      \tikzset{%
         name/.code={\edef\tikz@fig@name{#1 ##1}}
      }%
   }%
}
\makeatother

\usepackage[vlined,ruled,algo2e,linesnumbered]{algorithm2e}
\setlength{\algomargin}{0.2em}
\SetAlCapHSkip{0.2em}
\SetAlgoInsideSkip{smallskip}
\AtBeginDocument{%
  \SetCustomAlgoRuledWidth{0.81\textwidth}%
  \setlength{\algowidth}{0.975\textwidth}%
}
\makeatletter
\patchcmd{\@algocf@start}
  {\begin{lrbox}{\algocf@algobox}}
  {%
   \hspace*{0.095\textwidth}%
   \begin{lrbox}{\algocf@algobox}%
   \begin{minipage}{0.81\textwidth}%
  }
  {}{}
\patchcmd{\@algocf@finish}
  {\end{lrbox}}
  {\end{minipage}\end{lrbox}}
  {}{}
\makeatother

\usepackage[capitalize, nameinlink]{cleveref}
\crefname{theorem}{Theorem}{Theorems}
\Crefname{lemma}{Lemma}{Lemmas}
\Crefname{claim}{Claim}{Claims}
\Crefname{fact}{Fact}{Facts}
\Crefname{remark}{Remark}{Remarks}
\Crefname{observation}{Observation}{Observations}
\Crefname{figure}{Figure}{Figures}
\Crefname{algocf}{Algorithm}{Algorithms}
\Crefname{algoline}{Line}{Lines}

\newtheorem{theorem}{Theorem}
\newtheorem{lemma}[theorem]{Lemma}

\newtheorem{definition}[theorem]{Definition}
\newtheorem{remark}[theorem]{Remark}
\newtheorem{corollary}[theorem]{Corollary}

\newtheorem{observation}[theorem]{Observation}

\usepackage{xfrac}
\ExplSyntaxOn
\DeclareRestrictedTemplate { xfrac } { text } { math }
  {
    numerator-font      = \number \fam ,
    slash-symbol        = /            ,
    slash-symbol-font   = \number \fam ,
    denominator-font    = \number \fam ,
    scale-factor        = 0.7          ,
    scale-relative      = false        ,
    scaling             = true         ,
    denominator-bot-sep = 0 pt         ,
    math-mode           = true         ,
    phantom             = ( %
  }
\DeclareInstance { xfrac } { mathdefault } { math }
  { numerator-top-sep = 0pt }
\ExplSyntaxOff

\let\oldtop\top
\renewcommand{\top}{{\scriptscriptstyle{\oldtop}}}

\makeatletter
\def\@fnsymbol#1{\ensuremath{\ifcase#1\or *\or %
\ddagger\or
    \mathsection\or \mathparagraph\or \|\or **\or \dagger\dagger
    \or \ddagger\ddagger \else\@ctrerr\fi}}
\makeatother

\DeclareMathOperator*{\argmin}{arg\,min}

\title{A \texorpdfstring{$(\sfrac32+\sfrac1\e)$}{(3/2+1/e)}-Approximation Algorithm for Ordered TSP}

\author{%
Susanne Armbruster%
\footremember{Bonn}{%
Research Institute for Discrete Mathematics and Hausdorff Center for Mathematics, University of Bonn, Bonn, Germany.
Email:
\href{mailto:armbruster@or.uni-bonn.de}{armbruster@or.uni-bonn.de}, \href{mailto:mnaegele@uni-bonn.de}{mnaegele@uni-bonn.de}.
M.~N{\"a}gele is supported by the Swiss National Science Foundation (grant no.\ P500PT\_206742) and the Deutsche Forschungsgemeinschaft (DFG, German Research Foundation) under Germany's Excellence Strategy~--~EXZ-2047/1~--~390685813.
}%
\and
Matthias Mnich%
\footnote{%
Hamburg University of Technology, Institute for Algorithms and Complexity, Germany
Email:
\href{mailto:matthias.mnich@tuhh.de}{matthias.mnich@tuhh.de}.
Partially supported by the Deutsche Forschungsgemeinschaft (DFG, German Research Foundation), project MN 59/4-1.
}%
\and
Martin N{\"a}gele%
\footrecall{Bonn}%
}%
\date{}

\begin{document}

\maketitle

\begin{abstract}
We present a new $(\sfrac32+\sfrac1\e)$-approximation algorithm for the Ordered Traveling Salesperson Problem (Ordered TSP).
Ordered TSP is a variant of the classical metric Traveling Salesperson Problem (TSP) where a specified subset of vertices needs to appear on the output Hamiltonian cycle in a given order, and the task is to compute a cheapest such cycle.
Our approximation guarantee of approximately $1.868$ holds with respect to the value of a natural new linear programming (LP) relaxation for Ordered TSP.
Our result significantly improves upon the previously best known guarantee of~$\sfrac52$ for this problem and thereby considerably reduces the gap between approximability of Ordered TSP and metric TSP.
Our algorithm is based on a decomposition of the LP solution into weighted trees that serve as building blocks in our tour construction.
\end{abstract}

\thispagestyle{empty}
\addtocounter{page}{-1}

\begin{tikzpicture}[overlay, remember picture, shift = {(current page.south east)}]
\node[anchor=south east, outer sep=5mm] {
\begin{tikzpicture}[outer sep=0] %
\node (dfg) {\includegraphics[height=12mm]{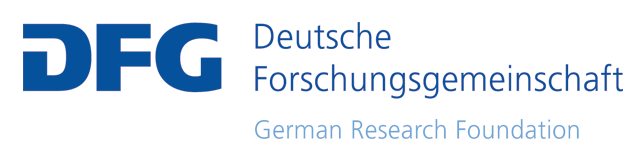}};
\node[left=5mm of dfg, yshift=1mm] (snf) {\includegraphics[height=8mm]{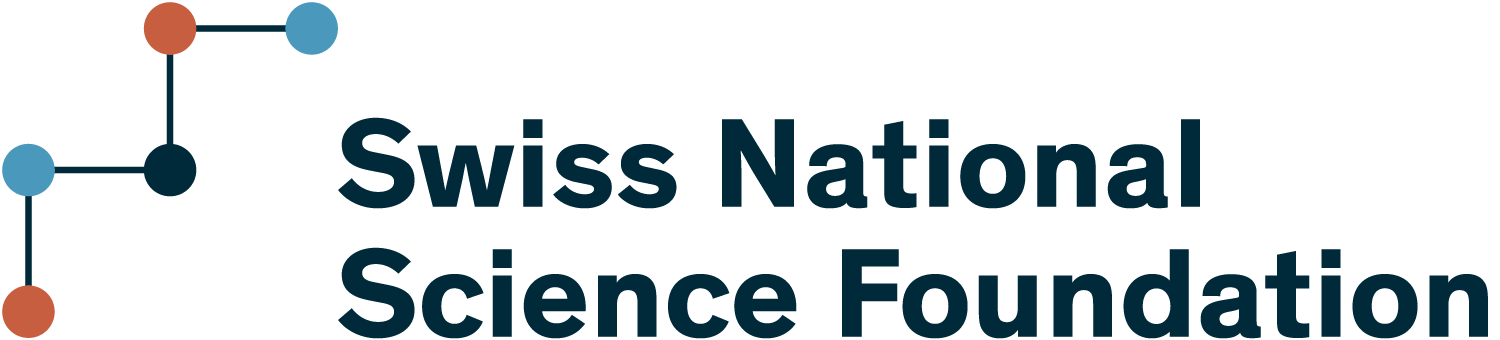}};
\end{tikzpicture}
};
\end{tikzpicture}

\newpage

\section{Introduction}

The classical metric Traveling Salesperson Problem (\linkdest{prb:TSP}\TSP) is one of the most fundamental and well-studied problems in Combinatorial Optimization and has a large number of applications.
A metric \TSP instance is given by a complete undirected graph~$G = (V, E)$ with metric edge cost~$c \colon E \to \mathbb{R}_{\geq 0}$.
The task is to find a cycle of minimum cost that visits each vertex exactly once, where the cost of a cycle equals the sum of the edge costs over all edges it contains.
Metric \TSP is highly relevant in many practical applications and thus, a lot of different variants are studied (see, e.g., \cite{SallerKK2023}).
The problem is \NP-hard and \APX-hard~\cite{PapadimitriouY1993}; concretely, assuming $\P\neq\NP$, it is known that no polynomial-time algorithm can guarantee to find a cycle of cost at most~$\sfrac{123}{122}$ times the cost of a cheapest cycle \cite{karpinski_2015_inapproximability}.
For a long time, the best-known approximation algorithm for metric \TSP was the Christofides-Serdyukov \sfrac32-approximation algorithm~\cite{christofides_1976_worst,christofides_1976_worst_new,serdyukov_1978_onekotorykh}.
This was recently improved to a breakthrough $(\sfrac32-\varepsilon)$-approximation algorithm, for some $\varepsilon > 10^{-36}$, by \textcite{karlin_2021_slightly,karlin_2023_derand}.

In this work, we focus on a generalization of metric \TSP known as \emph{Ordered TSP}, in which some of the vertices must be visited in a given order:

\begin{mdframed}[userdefinedwidth=0.95\linewidth]
\linkdest{prb:OTSP}{\textbf{Ordered TSP (\OTSP):}}
Given a complete undirected graph $G=(V,E)$ with metric edge cost $c\colon E\to\mathbb{R}_{\geq 0}$ and pairwise distinct vertices $d_1, \ldots, d_k\in V$, the task is to find a cheapest spanning cycle $C$ in $G$ that contains the vertices $d_1$, \ldots, $d_k$ in this order.
\end{mdframed}

\noindent
We typically refer to an input of \OTSP as an \emph{\OTSP instance} $(G, c, (d_1,\ldots,d_k))$; solutions are often called \emph{tours}.
Our goal in this paper is to further the understanding of the approximability of \OTSP, i.e., we aim to design $\alpha$-approximation algorithms for \OTSP with $\alpha$ as small as possible.

Clearly, \OTSP is at least as hard as metric \TSP, and therefore \APX-hard.
Surprisingly, not much more is known on the approximability of \OTSP.
\textcite{boeckenhauer_06_generalizations} observed that a $\sfrac52$-approximate solution can be readily obtained by first traversing $d_1,\ldots,d_k$ in this order and subsequently appending a tour on $V\setminus\{d_1,\ldots,d_k\}$ constructed through the Christofides-Serdyukov algorithm.
The black-box use of a metric \TSP approximation algorithm allows to reduce this guarantee by the same additive improvement of $\varepsilon > 10^{-36}$ as in the $(\sfrac32-\varepsilon)$-approximation by \citeauthor{karlin_2023_derand}.
Besides that, \textcite{boeckenhauer_2013_improved} gave a $(\sfrac52-\sfrac2k)$-approximation algorithm, where $k\geq 2$ is the number of ordered vertices in the \OTSP input.
Note that their result does not directly inherit the improvement achieved for metric \TSP, making its approximation ratio asymptotically inferior to the earlier approach of \citeauthor{boeckenhauer_06_generalizations}.
Finally, the intuition that \OTSP should become easier once~$k$ approaches $n$ is confirmed by a dynamic programming approach of \textcite{deineko_2006_inner} that runs in $O(2^{r}r^2n)$ time and $O(2^rrn)$ space, i.e., in polynomial time and space if $r\coloneqq n-k$, the number of vertices that are not in the input order, is of magnitude $O(\log n)$.

\bigskip

\OTSP is in fact a special case of a the following significantly more general \TSP variation termed \emph{TSP with Precedence Constraints}.

\begin{mdframed}[userdefinedwidth=0.95\linewidth]
\linkdest{prb:TSPPC}{\textbf{TSP with Precedence Constraints (\TSPPC):}}
Given a complete undirected graph $G=(V,E)$ with metric edge cost $c\colon E\to\mathbb{R}_{\geq 0}$ and a partial order $\prec$ on $V$, the task is to find a cheapest spanning cycle $C$ in $G$ that respects $\prec$, i.e., $C$ can be traversed such that whenever $u\prec v$ for two vertices $u,v\in V$, then $u$ appears earlier on $C$ than $v$.
\end{mdframed}

\noindent
Compared to the total order constraints in \OTSP, general partial orders allow for modeling a much wider range of problems.
One among many applications of \TSPPC is, e.g., tour planning for mixed pickup and delivery services, where one needs to make sure that a pickup happens before a delivery (but apart from that, pickups and deliveries can be intertwined arbitrarily).
There is a considerable body of research on the structure of the \TSPPC polyhedron, different dynamic programming algorithms, enhanced branch-and-bound methods, and various other exact and heuristic approaches, typically even for the more general version of \TSPPC with asymmetric edge cost (see, e.g., \cite{balas_1995_precedence,gouveia_2006_extended,salii_2019_revisiting,khachai_2023_precedence} and references therein).
Despite that, essentially no positive results on the approximability of \TSPPC are known, which is possibly explained by an influential hardness result of \textcite{CharikarMRS1997}:
By relating the problem to the \emph{Shortest Common Supersequence Problem}, they are able to show that there is no $(\log n)^{\delta}$-approximation for the path version of \TSPPC for any constant $\delta$, unless $\mathsf{NP}\subseteq \mathsf{DTIME}(n^{O(\log\log n)})$, even if the underlying metric space is a line.
This motivates our study of the approximability of \TSPPC on general metric spaces with special partial orders, i.e., \OTSP.

\subsection{Our results and techniques}

Our main contribution is to significantly improve the state of the art for \OTSP by giving an LP-relative approximation guarantee of $\sfrac32+\sfrac1\e\approx1.868$, as stated in the following theorem.

\begin{theorem}\label{thm:main}
There is a polynomial-time $\left(\sfrac32+\sfrac1\e\right)$-approximation algorithm for \OTSP.
\end{theorem}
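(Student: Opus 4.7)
The plan is to start from a natural LP relaxation that models an \OTSP tour as a sum of $k$ fractional strolls $x^{(1)}, \ldots, x^{(k)}$, where $x^{(i)}$ lies in the $d_i$-$d_{i+1}$ stroll polytope $\Pstroll$ and the $x^{(i)}$ jointly provide degree~$2$ at each non-ordered vertex. Set $\costLP \coloneqq \sum_i c^\top x^{(i)}$; trivially $\costLP \leq \costOPT$. The goal is then a polynomial-time rounding whose expected tour cost is at most $(\sfrac32+\sfrac1\e)\,\costLP$.

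The construction I would pursue is a segment-wise Christofides--Serdyukov rounding driven by a randomized vertex-to-segment assignment. First, assign each non-ordered vertex $v$ independently to a single segment $i$ with probability proportional to $x^{(i)}(\delta(v))$, reflecting the LP's coverage profile at $v$. Next, using Held--Karp-style polyhedral machinery~\cite{held_karp_spanning_trees}, decompose $x^{(i)}$ into a convex combination of weighted trees on the vertices assigned to segment $i$ together with the endpoints $d_i, d_{i+1}$, and sample one such tree $T^{(i)}$, whose expected cost is at most $c^\top x^{(i)}$. Add a minimum $T$-join on the odd-degree non-endpoint vertices of $T^{(i)}$, at expected cost at most $\tfrac12\,c^\top x^{(i)}$ via the $T$-join polytope, and shortcut the resulting Eulerian $d_i$-$d_{i+1}$ walk into an integral $d_i$-$d_{i+1}$ path. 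Concatenating the $k$ paths cyclically then yields a feasible \OTSP tour.

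The refinement from the naive $\sfrac32$ per segment to $\sfrac32+\sfrac1\e$ overall arises from the randomized assignment. The bottleneck is vertices whose random assignment leaves them uncovered by the sampled tree of their assigned segment. The key quantitative observation is that the worst-case probability (over the space of LP coverage profiles that sum to~$2$ at each vertex) that $v$ is uncovered by any sampled tree is at most $(1-\sfrac1k)^k \leq \sfrac1\e$. Such exceptional vertices are repaired by splicing them into a nearest segment's path, and the expected repair cost is bounded by $\sfrac1\e\,\costLP$ through a shortcut/charging argument against the LP mass incident to each vertex. Summed with the $\sfrac32$ term from the segment-wise Christofides--Serdyukov analysis, this yields the claimed $(\sfrac32+\sfrac1\e)$ factor.

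The main obstacle is certifying that the three cost contributions---the sampled trees $T^{(i)}$, the $T$-joins, and the repair of uncovered vertices---can be charged against $\costLP$ \emph{simultaneously} rather than separately. The most delicate point is to show that the $T$-join bound survives the conditioning on the random vertex assignment, which alters the odd-degree structure of each sampled tree, and that the repair step does not double-charge LP mass already spent on the tree or the join. Here I would invoke Mader-style edge splitting-off~\cite{mader_1978_reduction,frank_1992_on,lovasz_1976_connectivity,bang-jensen_1995_preserving}---whose presence in the bibliography is a strong hint that such tools play a role---on the LP support of each segment to certify that, after each vertex is routed to a single segment, the remaining fractional mass still supports a sufficiently cheap $T$-join and cheap connector edges to attach the repaired vertices. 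Once this coupling across the three rounding steps is established, taking expectations yields the $(\sfrac32+\sfrac1\e)$ bound of Theorem~\ref{thm:main}.
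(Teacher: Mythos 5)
Your proposal captures the LP relaxation, the idea of decomposing stroll solutions into trees, the sampling of one tree per segment, and the $\prod_i(1-y^i_v)\le \e^{-1}$ calculation for the repair cost---these are all present in the paper's argument. But the overall rounding mechanism is organized differently, and one step of yours has a genuine gap.

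\textbf{The per-segment $T$-join bound fails.} You perform $k$ independent parity corrections, one per segment, and claim each $T$-join can be bounded by $\tfrac12 c^\top x^{(i)}$. This requires $\tfrac12 x^{(i)}$ to lie in the dominant of the $\odd(T^{(i)})$-join polytope, i.e.\ that $x^{(i)}(\delta(S))\ge 2$ across all relevant cuts. But the stroll polytope $\Pstroll$ only guarantees $x^{(i)}(\delta(S))\ge 1$ for $d_i$-$d_{i+1}$ separating cuts and $x^{(i)}(\delta(S))\ge 2y^i_v$ for others, and $y^i_v$ is typically much less than~$1$ since coverage is split across segments. So the factor-$\tfrac12$ Wolsey bound does not apply segment-by-segment; this is exactly the well-known obstacle that makes $s$-$t$ path TSP harder than $3/2$. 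You flag this as ``delicate'' and gesture at splitting-off to fix it, but splitting-off will not repair a cut of value $1$. The paper sidesteps this entirely: it builds the multigraph $H$ from all $k$ sampled trees plus connectors, then performs a \emph{single} global $\odd(H)$-join. The key observation (\cref{obs:HK_feasibility}) is that $x\coloneqq\sum_i x^i\in\PHK(G)$ because the stroll constraints combine across segments to give $x(\delta(S))\ge 2$ for every nontrivial cut; hence $\tfrac12 x$ is in the dominant of every $Q$-join polytope and the global join costs at most $\tfrac12\costLP$.

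\textbf{The vertex-to-segment pre-assignment is also not the mechanism.} You randomly route each $v$ to one segment using the $x^{(i)}(\delta(v))$ values and then decompose $x^{(i)}$ ``on the vertices assigned to segment $i$''. But there is no reason $x^{(i)}$ restricted to a random subset decomposes into trees spanning that subset, and the uncovered-vertex event you analyze becomes tangled up with both the assignment and the subsequent sampling. The paper instead directly decomposes each $(x^i,y^i)\in\Pstroll$ into a convex combination of $d_i$-$d_{i+1}$-trees with the exact marginal property $\sum_{T\ni v}\mu^i_T=y^i_v$ (\cref{lem:decomposition}, derived from~\cite{bang-jensen_1995_preserving} via~\cite{blauth_2023_improved}), then independently samples one tree per segment. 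The $\e^{-1}$ isolation probability then drops out cleanly, and the repair cost is charged against a minimum spanning tree (itself $\le\costLP$) via an elementary edge-orientation argument (\cref{lem:connector_cost}), not against ``LP mass incident to each vertex''. The paths are never produced or concatenated segment-by-segment: the union of the sampled trees is automatically one connected component containing a walk through $d_1,\dots,d_k$ in order (\cref{lem:tree_cost}), and after attaching the isolated vertices and adding the global join, a single shortcutting pass (\cref{lem:spanning_cycle}) finishes the tour.
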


This constitutes a significant improvement over the previous $(\sfrac52-\varepsilon)$-approximation algorithm.
We achieve this improvement by introducing a new linear programming (LP) relaxation for \OTSP and devising a suitable rounding procedure.
The LP relaxation is based on the Held-Karp relaxation that is typically leveraged in the context of \TSP, but allows for taking the prescribed order of the vertices $d_1,\ldots,d_k$ into account by using disjoint sets of variables to represent the $d_i$-$d_{i+1}$ strolls%
\footnote{%
We use the term \emph{$s$-$t$ stroll} instead of \emph{$s$-$t$ path} for a path from $s$ to $t$ in the underlying graph to emphasize that we do not require all vertices to be covered.
Also, for convenience of notation, we use $d_{k+1}\coloneqq d_1$ throughout the paper.
}
that a solution is composed of.
Our rounding procedure crucially relies on a result on decomposing (fractional) $s$-$t$ strolls into a convex combination of trees.
This decomposition resembles an existential result by \textcite[Theorem~2.6]{bang-jensen_1995_preserving} on packing branchings in a directed multigraph.
Variations thereof have recently been used for advances on another variant of \TSP, namely \emph{Prize-Collecting TSP} \cite{blauth_2023_improved,blauth_2024_better}, and motivate the application here. (See \cref{lem:decomposition} for the precise statement of the decomposition result.)
The trees obtained from stroll decompositions enable the construction of a subgraph that spans a reasonably large part of~$V$ at cost no more than the LP solution cost, and contains a walk with visits at $d_1,\ldots,d_k$ in this order.
Our tour construction is completed by connecting the remaining isolated vertices in a cheapest possible way, and applying a parity correction step as typical for \TSP-like problems.

\bigskip

Our approach crucially relies on being able to split a solution into $d_i$-$d_{i+1}$ strolls upfront, hence it is not directly suitable for handling arbitrary precedence constraints other than total orders.
While one can always try to guess a suitable total order that is compatible with the given partial order, and then apply \cref{thm:main}, this is generally not efficient.
We can, though, obtain approximation algorithms for some special cases of precedence constraints, as for example in the following result that is a direct generalization of \cref{thm:main}.

\begin{theorem}\label{thm:indep_orders}
Consider a \TSPPC instance $(G, c, \prec)$ on a complete graph $G=(V,E)$ with a partial order~$\prec$ that can be equivalently given as independent total orders on disjoint subsets $D_1,\ldots,D_\ell\subseteq V$.
There is a polynomial-time $(\ell+\sfrac12+\sfrac1{\e^\ell})$-approximation algorithm for this class of \TSPPC problems.
\end{theorem}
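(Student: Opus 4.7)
The strategy is to mirror the proof of \cref{thm:main} while exploiting independent randomness across the $\ell$ chains: a linear-in-$\ell$ contribution from summing $\ell$ chain LPs, a Christofides-style $\sfrac12$ from parity correction, and a $\sfrac1{\e^\ell}$ contribution from running $\ell$ vertex-coverage trials independently. For each chain $j \in \{1,\dots,\ell\}$, I set up the single-chain OTSP LP from \cref{thm:main} for $D_j$, treating the vertices in $V \setminus D_j$ as unordered. Every TSPPC-feasible Hamiltonian cycle respects each $D_j$'s order individually, so each of these LPs has optimum at most $\costOPT$. Applying the stroll-decomposition-and-tree-selection rounding of \cref{thm:main} independently per chain yields $\ell$ random subgraphs $H_1,\dots,H_\ell$, each of expected cost at most $\costOPT$, each containing a closed walk visiting $D_j$ in the prescribed order, and each covering any fixed $v \in V$ with probability at least $1 - 1/\e$.

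Consider the multigraph union $H \coloneqq H_1 \cup \cdots \cup H_\ell$, of expected cost at most $\ell \cdot \costOPT$. By independence across chains, any fixed vertex is uncovered in $H$ with probability at most $1/\e^\ell$. Connect each uncovered vertex $v$ using the same per-vertex rescue argument as in \cref{thm:main} applied to a single chain (say chain $1$): the per-vertex rescue costs $c_v$ satisfy $\sum_v c_v \leq \costOPT$, and since a vertex needs to be rescued only with probability at most $1/\e^\ell$, the expected total rescue cost is at most $(1/\e^\ell)\cdot\costOPT$. Finally, add a cheapest $T$-join on the odd-degree vertices of the resulting graph; the standard Christofides--Serdyukov bound against $\costOPT$ gives cost at most $(1/2)\cdot\costOPT$. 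Summing, the expected total cost is $(\ell + \sfrac12 + \sfrac1{\e^\ell})\cdot\costOPT$, and standard derandomization (analogous to what is used for \cref{thm:main}) makes the algorithm deterministic.

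The main technical hurdle is extracting, from the resulting Eulerian multigraph, a Hamiltonian cycle that respects \emph{all} $\ell$ chain orders simultaneously. My plan is to fix a common base vertex $v_0$ that lies in every $H_j$ — such a vertex exists with overwhelming probability since each $H_j$ covers a $(1-1/\e)$-fraction of $V$ in expectation and can otherwise be forced by a mild modification of the rounding — and then build the Eulerian traversal by concatenating the per-chain Eulerian circuits of the $H_j$'s (each augmented by its share of rescue and $T$-join edges) as closed excursions rooted at $v_0$. The $j$-th excursion respects $D_j$'s order by construction. The delicate point is that an $H_j$-excursion may pass through vertices of another chain $D_m$ at intermediate positions, possibly ahead of the $H_m$-excursion, which could destroy $D_m$'s first-visit order after shortcutting. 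I expect this to be resolved either by restricting each $H_j$'s walk so it avoids $D_m$-vertices for $m \neq j$ on first encounter (taking care that the corresponding rescue and parity budgets still match) or by a careful rotation of the combined Eulerian circuit aligning each $d^{(m)}_1$ with the start of $H_m$'s excursion; this alignment argument is the technical crux of lifting \cref{thm:main} to the multi-chain setting.
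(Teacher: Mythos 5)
Your high-level accounting of the three cost terms ($\ell$ from summing per-chain tree costs, $\sfrac12$ from parity correction, $\sfrac1{\e^\ell}$ from independent coverage trials) is correct and matches the paper's structure. However, you explicitly leave open what you call ``the technical crux of lifting \cref{thm:main} to the multi-chain setting,'' namely how to assemble a Hamiltonian cycle that respects all $\ell$ orders from the union $H_1\cup\cdots\cup H_\ell$, and the fixes you sketch for it do not work. There is no reason a common vertex $v_0$ should appear in all $H_j$: each $H_j$ covers each non-chain vertex only with probability roughly $1 - \sfrac1\e$, and the events are independent across $j$, so for large $\ell$ the expected number of vertices surviving in every $H_j$ can be a vanishing fraction of $|V|$ (indeed zero if the chains partition $V$). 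Moreover, even granted such a $v_0$, a closed walk through $H_j$ started and ended at an arbitrary $v_0$ need not visit $D_j$ in order -- the order guarantee of \cref{lem:tree_cost}\cref{itm:connecting_components} is tied to concatenating the $d_i$-$d_{i+1}$ tree paths, and an arbitrary interior vertex sits somewhere in the middle of that sequence.

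The missing idea is the paper's anchor-vertex trick: guess a vertex $d_0$ which, among the minimal elements of all chains, is the \emph{first} to appear on an optimal tour, and prepend $d_0$ to every chain, replacing $D_j$ by $\{d_0\}\cup D_j$ ordered so that $d_0$ precedes all of $D_j$. This simultaneously resolves both of your difficulties. First, since the optimal tour visits $d_0$ before any minimal element of any chain, it is feasible for every per-chain OTSP LP on $(G,c,\{d_0\}\dotcup D_j)$, so each $\costLP^j\le\costOPT$ still holds. Second, every sampled tree from chain $j$ contains $d_0$, so all $H_j$ share $d_0$, the union is connected without any need for a common ``$v_0$,'' and the desired walk is obtained simply by concatenating, at $d_0$, the closed walks through each $H_j$ that visit $d_0, D_j, d_0$ in order (these exist by \cref{lem:tree_cost}\cref{itm:connecting_components} applied with $d_0$ as one of the ordered terminals). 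Shortcutting via \cref{lem:spanning_cycle} then gives the cycle respecting $\prec$. Without this anchor, your concatenation scheme is not well founded, and your own proposal acknowledges as much, so as written the proof has a genuine gap at its central step.
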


The total orders on the sets~$D_i$ are also called \emph{chains}.
We remark that losing a factor of $\ell$ in \cref{thm:indep_orders} is intrinsic to our approach: We never merge the given chains, but traverse them one after another.
Still, the result of \cref{thm:indep_orders} is superior to a black-box algorithm that independently applies the algorithm from \cref{thm:main} to the $\ell$ chains and concatenates the resulting tours (while shortcutting to avoid repeated visits).

\subsection{Related Work}
Variations of \OTSP and \TSPPC are also studied in the context of scheduling with precedence constraints.
In a classical setup, denoted by $Pm|\textnormal{prec}|C_{\max}$ in the scheduling literature, one needs to find a schedule for a set $\mathcal J$ of $n$ jobs on $m$ identical machines subject to precedence constraints between the jobs.
Formally, each job $j\in \mathcal J$ is characterized by a processing time $p_j\in\mathbb Z_{\geq 0}$, and a schedule $\sigma\colon\mathcal J\rightarrow\mathbb Z_{\geq 0}\times\{1,\ldots,m\}$ assigns each job $j\in \mathcal J$ to a pair $(\sigma_1(j), \sigma_2(j))$ consisting of an integer start time $\sigma_1(j)$ and a machine $\sigma_2(j)$ such that no other job scheduled on that machine has their start time in the time interval $[\sigma_1(j),\sigma_1(j)+p_j]$, and for any two jobs $j,j'\in\mathcal J$ related as $j \prec j'$ it holds that $\sigma_1(j) < \sigma_2(j')$.
The makespan objective~$C_{\max}$ of a schedule $\sigma$ is the maximum completion time $C_j = \sigma_1(j)+p_j$ over all jobs $j\in \mathcal J$.
Generally, precedence constraints of this type are studied extensively in a wide range of scheduling problems, including different settings and objectives (see, e.g., \cite{svensson_proc_scheduling, graham_scheduling, rothvoss_scheduling, levey_rothvoss_m_3_scheduling}).
The three-machine problem~$P3 | \textnormal{prec}, p_j\equiv 1 | C_{\max}$ is one of the few famous open problems by \textcite{garey_1979_computers} whose computational complexity has not yet been resolved.

The complexity of many scheduling problems with precedence constraints that are chains has been well-investigated.
An influential paper of \textcite{LenstraRinnooyKan1980} shows strong $\mathsf{NP}$-hardness for minimizing the number of chain-constrained unit-size jobs that miss their deadline on a single machine.
Several other works with chain constraints have appeared~\cite{JansenSolisOba2010,DuLY1991,Kunde1981,Woeginger2000}.

Towards analogues of \TSPPC, we may consider the aforementioned problem $Pm|\textnormal{prec}|C_{\max}$ on a single machine, but add sequence-dependent setup times $s_{ij}\in\mathbb Z_{\geq0}$ between any two jobs $i$ and $j$, which add to the makespan of the schedule.
This problem, which is denoted as $1|\textnormal{prec},s_{ij}|C_{\max}$, was discussed by \textcite[Section~3.1.2]{LiaeeEmmons1996}.
In case of \TSPPC, the setup times are metric (i.e., $s_{ij} \leq s_{ik} + s_{kj}$ for any triple $(i,j,k)$ of distinct jobs), and all jobs have equal processing time $p_j\equiv 0$.
To be precise, the objective function for \TSPPC takes into account the cost for returning to the origin city whereas no such cost occurs in the objective function for $1|\textnormal{prec},s_{ij}|C_{\max}$, hence the latter in fact models a path version of \TSPPC.

\subsection{Organization of the paper}

In \cref{sec:algorithm}, we introduce our new linear programming formulation for \OTSP (\cref{subsec:polyhedra}) and analyze a randomized algorithm giving the guarantee of \cref{thm:main} in expectation (\cref{subsec:rounding}).
We show how this algorithm can be derandomized in \cref{subsec:derandomization}.
Finally, \cref{sec:indep_orders} extends our framework to yield \cref{thm:indep_orders}, and \cref{sec:decomposition_proof} shows how our main technical lemma is implied by a closely related known result.

\section{Our algorithm}\label{sec:algorithm}

\subsection{The LP relaxation and polyhedral basics}\label{subsec:polyhedra}

The most commonly used LP relaxation in approximation algorithms for classical \TSP is the so-called \emph{Held-Karp relaxation}.
It was first introduced by \textcite{dantzig_fulkerson_johnson} and is given by
$$
\PHK(G) \coloneqq \left\{x\in\mathbb{R}_{\geq 0}^E\colon
\begin{array}{rcll}
x(\delta(v)) & = & 2 & \forall v\in V\\
x(\delta(S)) & \geq & 2 & \forall S\subsetneq V, S\neq\emptyset
\end{array}\right\}\enspace,
$$
where $G=(V,E)$ is the underlying complete graph.%
\footnote{For $S \subseteq V$ we denote by~$\delta(S)$ the set of edges with exactly one endpoint in~$S$.
For~$v \in V$, we abbreviate~$\delta(v) \coloneqq \delta(\{v\})$.}
While \TSP simply asks for a spanning cycle, \OTSP requires that the vertices $d_1$, \ldots, $d_k$ appear on the cycle in this order.
Thus, a solution is naturally composed of $k$ strolls, namely a $d_i$-$d_{i+1}$ stroll for every $i\in\{1,\ldots,k\}$.
For a polyhedral description of $s$-$t$ strolls in a complete graph $G=(V,E)$, we modify the Held-Karp relaxation for \emph{$s$-$t$ path TSP}%
\footnote{Given a complete graph $G=(V,E)$ with metric edge costs and vertices $s,t\in V$, \emph{$s$-$t$ path TSP} is the variant of \TSP that seeks a path of smallest total cost from $s$ to~$t$ while visiting every vertex exactly once.%
}
to allow partial coverage of vertices.
Concretely, the variables $y\in\mathbb{R}_{\geq 0}^V$ in the following formulation indicate the extent at which vertices are covered:%
\footnote{%
\label{fn:stroll_coverage_variables}%
The constraints of $\Pstroll$ imply that for $v\in V\setminus \{s,t\}$, we have $2 y_v \leq x(\delta(V\setminus\{s,t\})) \leq x(\delta(s)) + x(\delta(t)) \leq 2$, and thus~$y_v\leq 1$, legitimating the proposed interpretation.
}
\begin{equation}
\label{eqn:polyhedral_stroll}
\Pstroll(G)\coloneqq \left\{
(x, y)\in\mathbb{R}_{\geq 0}^E\times\mathbb{R}_{\geq 0}^V\colon
\begin{array}{rcll}
x(\delta(v)) & = & 2 y_v & \forall v\in V\\
x(\delta(S)) & \geq & 1 & \forall S\subseteq V\setminus \{t\}, s\in S\\
x(\delta(S)) & \geq & 2 y_v & \forall S\subseteq V\setminus \{s,t\}, v\in S\\\
y_s = y_t & = & \sfrac12
\end{array}
\right\} \enspace .
\end{equation}
Note that setting $y_s=y_t=\sfrac12$ corresponds to $s$ and $t$ having degree $1$ in an $s$-$t$ stroll, while all interior vertices of an integral stroll have degree $2$, which corresponds to a $y$-value of $1$.
Using the above polyhedral relaxation~\eqref{eqn:polyhedral_stroll} for all $d_i$-$d_{i+1}$ strolls, it remains to link the strolls by requiring full joint coverage of every $v\in V$.
This results in the following LP relaxation for \OTSP:
\begin{align*}\label{eq:lp_relaxation}\tag{\OTSP LP relaxation}
\begin{array}{rrcll}
\min & \displaystyle \sum_{e\in E} c_e \sum_{i=1}^k x^i_e \\
     & \displaystyle\sum_{i=1}^k y^i_v & = & 1 & \forall v\in V\\
     & (x^i, y^i) & \in & \Pstroll[$d_i$-$d_{i+1}$](G) & \forall i\in\{1,\ldots k\}\enspace.
\end{array}
\end{align*}

It is clear that any \OTSP solution can be turned into a feasible solution to the above LP of the same objective value, hence the above LP is indeed a relaxation of \OTSP.
We first observe that this \ref{eq:lp_relaxation} strengthens the Held-Karp relaxation in the following sense.

\begin{observation}\label{obs:HK_feasibility}
Let $(x^i,y^i)_{i\in\{1,\ldots,k\}}$ be feasible for the \ref{eq:lp_relaxation}.
Then $x\coloneqq\sum_{i=1}^k x^i\in\PHK(G)$.
\end{observation}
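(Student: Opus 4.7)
The plan is to verify the three defining conditions of $\PHK(G)$ for $x \coloneqq \sum_{i=1}^k x^i$: nonnegativity, the degree equalities, and the subtour elimination cuts. Nonnegativity is immediate since each $x^i \geq 0$. The degree constraints follow from
\[
x(\delta(v)) = \sum_{i=1}^k x^i(\delta(v)) = \sum_{i=1}^k 2 y^i_v = 2,
\]
using the coupling constraint $\sum_i y^i_v = 1$ of the \ref{eq:lp_relaxation} together with the degree equalities in each $\Pstroll[$d_i$-$d_{i+1}$]$.

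The real content is verifying $x(\delta(S)) \geq 2$ for every $\emptyset \neq S \subsetneq V$. I would distinguish cases according to how the cyclic sequence $d_1, d_2, \ldots, d_k, d_1$ interacts with $S$, defining $I \coloneqq \{i \in \{1,\ldots,k\} : |S \cap \{d_i, d_{i+1}\}| = 1\}$, the indices where the cut $\delta(S)$ is crossed by a $d_i$-$d_{i+1}$ stroll. Since the sequence is cyclic, a parity argument shows that $|I|$ is even.

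If $|I| \geq 2$, the cut inequality in $\Pstroll[$d_i$-$d_{i+1}$]$ gives $x^i(\delta(S)) \geq 1$ for every $i \in I$, and summing yields $x(\delta(S)) \geq |I| \geq 2$. If $|I| = 0$, then either $\{d_1,\ldots,d_k\} \cap S = \emptyset$ or $\{d_1,\ldots,d_k\} \subseteq S$. In the former case, pick any $v \in S$; then $S \subseteq V \setminus \{d_i, d_{i+1}\}$ for every $i$, and the stroll coverage constraint gives $x^i(\delta(S)) \geq 2 y^i_v$, so summing over $i$ and using $\sum_i y^i_v = 1$ yields $x(\delta(S)) \geq 2$.

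The only slightly nonobvious step will be the remaining subcase $\{d_1,\ldots,d_k\} \subseteq S$, where the stroll cut inequality is not stated directly for $S$. I would pass to the complement: $V \setminus S$ is nonempty and contained in $V \setminus \{d_i, d_{i+1}\}$ for every $i$, so picking any $v \in V \setminus S$ gives $x^i(\delta(V \setminus S)) \geq 2 y^i_v$, and since $\delta(S) = \delta(V \setminus S)$, summing again yields $x(\delta(S)) \geq 2$. This closes all cases and completes the verification.
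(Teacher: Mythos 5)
Your proof is correct and follows essentially the same route as the paper, just with slightly different bookkeeping in the cut-constraint case analysis. The paper splits on whether both $S$ and $V\setminus S$ meet $\{d_1,\ldots,d_k\}$ and, when they do, picks two explicit crossing indices $i_1,i_2$; you instead define the crossing set $I$ and invoke a cyclic parity argument to get $|I|\geq 2$. These are the same observation phrased differently, and your handling of $I=\emptyset$ (passing to the complement when $\{d_1,\ldots,d_k\}\subseteq S$) mirrors the paper's ``without loss of generality'' step.
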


\begin{proof}
To see that $x$ satisfies the degree constraints in $\PHK$, note that for all $v\in V$, we have
$$
x(\delta(v)) = \sum_{i=1}^k x^i(\delta(v)) = 2 \cdot \sum_{i=1}^k y_i = 2\enspace.
$$
To verify the cut constraints, let $S\subsetneq V$ be a non-empty set of vertices.
If both $S\cap\{d_1,\ldots, d_k\}\neq\emptyset$ and~$(V\setminus S)\cap\{d_1,\ldots,d_k\}\neq\emptyset$, then there exist two distinct indices $i_1,i_2\in\{1,\ldots,k\}$ such that $d_{i_1}\in S$ but $d_{i_1+1}\notin S$, and $d_{i_2}\notin S$ but $d_{i_2+1}\in S$.
This implies that $x^{i_1}(\delta(S))\geq 1$ and $x^{i_2}(\delta(V\setminus S))\geq 1$, so we get
$$
x(\delta(S)) = \sum_{i=1}^k x^i(\delta(S)) \geq x^{i_1}(\delta(S)) + x^{i_2}(\delta(S)) = x^{i_1}(\delta(S)) + x^{i_2}(\delta(V\setminus S)) \geq 2\enspace.
$$
Otherwise, assume without loss of generality that $S\cap\{d_1,\ldots,d_k\}$ is empty (if not, $V\setminus S$ has this property) and fix a vertex $v\in S$.
We then know that $x^i(\delta(S))\geq 2y_v$ for all $i\in\{1,\ldots,k\}$, hence
\begin{equation*}
x(\delta(S)) = \sum_{i=1}^k x^i(\delta(S)) \geq 2\cdot\sum_{i=1}^k y_v = 2\enspace.\qedhere
\end{equation*}
\end{proof}

The point $x\in\PHK(G)$ constructed in \cref{obs:HK_feasibility} has the property that its cost $c^\top x$ equals the objective value $\costLP$ of the feasible point of the \ref{eq:lp_relaxation} that we started with.
Thus, following the arguments of \citeauthor{wolsey_1980_heuristic}'s polyhedral analysis \cite{wolsey_1980_heuristic} of the Christofides-Serdyukov algorithm, we immediately obtain the following.

\begin{corollary}\label{cor:tree_join_cost}
Let $\costLP$ denote the optimal objective value of the \ref{eq:lp_relaxation}.
Then, in the underlying graph $G$ with edge costs $c$, the following holds true.
\begin{enumerate}
\item\label{item:tree} A shortest spanning tree $T$ satisfies $c(T)\leq \costLP$.
\item\label{item:join} For any even cardinality set $Q\subseteq V$, a shortest $Q$-join $J$ satisfies $c(J)\leq \frac12\cdot \costLP$.
\end{enumerate}
\end{corollary}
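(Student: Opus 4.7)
The plan is to reduce both statements to the standard polyhedral analysis of \citeauthor{wolsey_1980_heuristic} for the Christofides--Serdyukov algorithm by passing through the Held--Karp relaxation via \cref{obs:HK_feasibility}. Concretely, I would start with an optimal LP solution $(x^i,y^i)_{i=1,\ldots,k}$ of the \ref{eq:lp_relaxation} and set $x\coloneqq\sum_{i=1}^k x^i$. \cref{obs:HK_feasibility} gives $x\in\PHK(G)$, and a direct expansion of the LP objective yields $c^\top x = \sum_{e\in E} c_e \sum_{i=1}^k x^i_e = \costLP$. Thus both claims reduce to the corresponding bounds relative to $c^\top x$ for an arbitrary point of the Held--Karp polytope.

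For \cref{item:tree}, I would use that summing the degree constraints gives $x(E)=n$, while the cut constraints give $x(\delta(S))\geq 2$ for every proper nonempty $S\subsetneq V$. Hence the rescaled vector $\tfrac{n-1}{n}x$ lies in the spanning tree polytope of $G$ and is therefore a convex combination of spanning-tree incidence vectors, so a shortest spanning tree costs at most $\tfrac{n-1}{n}\,c^\top x\leq \costLP$.

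For \cref{item:join}, I would argue that $\tfrac{1}{2}x$ lies in the dominant of the $Q$-join polytope. Since $|Q|$ is even, the defining cut inequalities $z(\delta(S))\geq 1$ are restricted to $Q$-odd subsets $S$, all of which are proper and nonempty. On any such $S$, the Held--Karp cut constraint already gives $\tfrac{1}{2}x(\delta(S))\geq 1$, so $\tfrac12 x$ dominates a convex combination of $Q$-join incidence vectors. Consequently, a shortest $Q$-join has cost at most $\tfrac{1}{2}c^\top x = \tfrac{1}{2}\costLP$.

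I do not anticipate a real obstacle here: once \cref{obs:HK_feasibility} is in place, both bounds are completely standard consequences of the spanning tree polytope description and of Edmonds--Johnson's description of the $T$-join dominant. The only point that requires a sentence of care is the identity $c^\top x = \costLP$, which merely unfolds the objective function of the \ref{eq:lp_relaxation}.
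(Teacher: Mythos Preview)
Your proposal is correct and follows essentially the same route as the paper: take an optimal LP solution, aggregate to $x=\sum_i x^i\in\PHK(G)$ via \cref{obs:HK_feasibility}, note $c^\top x=\costLP$, and then invoke the standard facts that $\tfrac{n-1}{n}x$ lies in the spanning tree polytope and $\tfrac12 x$ in the $Q$-join dominant. You merely spell out a bit more of the polyhedral background than the paper does, but the argument is identical.
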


\begin{proof}
Let $(x^i,y^i)_{i\in\{1,\ldots,k\}}$ be an optimal solution of the \ref{eq:lp_relaxation}.
By \cref{obs:HK_feasibility}, $x\coloneqq\sum_{i=1}^k x^i\in\PHK(G)$.
It is well-known due to \textcite{held_karp_spanning_trees} that then, $\frac{|V|-1}{|V|}\cdot x$ is feasible for the spanning tree polytope, and due to \textcite{wolsey_1980_heuristic} that $\frac12 x$ is feasible for the dominant of the $Q$-join polytope, hence $c(T)\leq \smash{\frac{|V|-1}{|V|}}\cdot c^\top x < c^\top x$ and $c(J)\leq \frac12 c^\top x$.
Using that $c^\top x = \costLP$, the result follows.
\end{proof}

\subsection{Rounding an LP solution}\label{subsec:rounding}

At its core, our algorithm for rounding a typically fractional solution $(x^i, y^i)_{i\in\{1,\ldots,k\}}$ of the \ref{eq:lp_relaxation} is based on leveraging a decomposition result for each of the points $(x^i, y^i)\in\Pstroll[$d_i$-$d_{i+1}$]$.
By scaling up $(x^i, y^i)$ by a large enough factor $M$ such that $M x^i$ is integral, this decomposition can be viewed as a result on packing trees into the multigraph that has $M x_e^i$ copies of every edge $e\in E$ and such that every vertex $v$ appears in $M y^i$ many of the trees.
While most results of this type deal with packing spanning trees (or, in the directed case, arborescences), i.e., consider uniform packings, \textcite{bang-jensen_1995_preserving} gave one of few results in a non-uniform setting as we are facing here.
Their splitting-off based construction was revised by \textcite{blauth_2023_improved} to obtain more fine-grained control over the output components of the decomposition when starting from a solution of a Held-Karp-type relaxation that allows partial coverage of vertices (similar to what we allow in $\Pstroll$).
We observe that these findings can be immediately carried over to solutions of $\Pstroll$, giving \cref{lem:decomposition} below.
We defer a formal proof to \cref{sec:decomposition_proof}.

\begin{lemma}\label{lem:decomposition}
Let $G=(V,E)$ be an undirected graph, $s,t\in V$, and let $(x,y)\in\Pstroll(G)$.
We can in polynomial time compute a family $\T$ of subtrees of $G$ that all contain the vertices $s$ and $t$, and weights $\mu\in[0,1]^{\T}$ with $\sum_{T\in\T}\mu_T=1$ such that%
\footnote{For a graph~$H$ we denote by~$V[H]$ the set of vertices and by~$E[H]$ the set of edges of~$H$.}
\begin{equation*}
\sum_{T\in\T} \mu_T\chi^{E[T]} = x
\qquad\text{and}\qquad
\sum_{T\in\T\colon v \in V[T]} \mu_T = y_v \quad \forall v\in V\setminus\{s,t\}\enspace.
\end{equation*}
\end{lemma}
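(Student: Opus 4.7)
The plan is to derive the lemma by adapting to our two-terminal setting the splitting-off-based decomposition framework of \textcite{blauth_2023_improved}, which in turn refines the tree-packing result of \textcite{bang-jensen_1995_preserving}. That framework produces, for a single-root Held-Karp-type relaxation with partial vertex coverage, a convex combination of subtrees each containing the chosen root, with matching edge and vertex coverages; our situation only differs in that \emph{two} vertices $s$ and $t$ must appear in every subtree of the decomposition.

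The reduction I envision proceeds as follows. First, scale $(x, y)$ by a suitable integer $M$ so that $Mx$ and $My$ become integral, obtaining a multigraph $H$ on $V$ in which $s$ and $t$ each have degree $M$, every interior vertex $v \in V \setminus \{s,t\}$ has even degree $2My_v$, and the cut lower bounds dictated by the $\Pstroll$ constraints translate into their integer forms. Next, carry out the splitting-off construction of Blauth and Nägele essentially verbatim, with one modification: the family of cut requirements to be preserved throughout the splittings comprises not only the coverage bounds $|\delta_H(S)| \geq 2My_v$ for $v \in S \subseteq V \setminus \{s,t\}$, but also the $s$-$t$ stroll connectivity bounds $|\delta_H(S)| \geq M$ for $S \subseteq V \setminus \{t\}$ with $s \in S$. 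The resulting reduced structure then decomposes into $M$ subtrees of $H$, each containing both $s$ and $t$, with the prescribed vertex multiplicities; scaling weights down by $1/M$ yields the claimed convex combination.

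The main obstacle that I anticipate is verifying that at every iteration a valid splitting pair exists at each interior vertex of positive degree, i.e., that some pair of incident edges can be replaced by a single edge without violating either family of cut constraints. This follows from the classical fractional splitting-off theorems of \textcite{lovasz_1976_connectivity,mader_1978_reduction,frank_1992_on}, whose applicability rests on the fact that both families of constraints describe symmetric, submodular cut requirements of exactly the same form as those treated in the single-root case in \cite{blauth_2023_improved}: adding the stroll connectivity bounds as a second ``root-like'' family does not disturb the submodularity argument that ensures admissible splits exist. The detailed verification, together with the precise translation into trees, is deferred to \cref{sec:decomposition_proof}.
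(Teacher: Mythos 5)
Your plan takes a genuinely different route from the paper, and it carries a verification burden the paper sidesteps. You propose to re-run the splitting-off construction of~\cite{blauth_2023_improved} from scratch, enlarging the family of preserved cut requirements to include the $s$-$t$ stroll cuts $x(\delta(S))\geq 1$ for $S\subseteq V\setminus\{t\}$ with $s\in S$ alongside the coverage cuts. That means you must re-verify, for the combined (two-family) constraint system, that an admissible split always exists at each interior vertex --- the crux of the Blauth--N\"agele argument --- and then re-derive that the final packing consists of trees each of which contains \emph{both} $s$ and $t$. None of this is automatic: the classical splitting-off theorems you cite are for a single symmetric skew-supermodular demand, and the interaction of the coverage demands (one per vertex $v$) with the new $s$-$t$ demand needs to be checked carefully, as does the terminal step of converting the reduced instance into a tree packing with $s$ and $t$ in every tree.

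The paper avoids all of this with a one-line reduction: set $x'\coloneqq x+\chi^{\{s,t\}}$ and $y'\coloneqq y+\frac12(\chi^s+\chi^t)$, i.e.\ ``close'' the $s$-$t$ stroll to a tour by adding a unit of the edge $\{s,t\}$. One then checks directly that $(x',y')$ satisfies the hypotheses of \cref{lem:general_decomposition} with root $r=s$ and anchor $u=t$ (the added edge supplies $y'_s=y'_t=1$ and $x'_{\{s,t\}}\geq1$, and the $s$-$t$ stroll cuts together with the extra edge imply the rooted cut constraints). Then \cref{lem:general_decomposition} is applied as a black box, and removing the auxiliary $\chi^{\{s,t\}}$ term from both sides of the conic decomposition yields exactly the claimed statement. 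So: your approach is plausible and well-motivated, but as written it is an outline that defers the genuinely delicate part (``the detailed verification\dots is deferred''), whereas the paper's reduction makes the whole lemma follow from the prior result without touching the splitting-off machinery at all. If you pursue your route you would, in effect, be reproving a strengthened variant of the Blauth--N\"agele lemma rather than using it.
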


In other words, \cref{lem:decomposition} allows to decompose a fractional $s$-$t$ stroll into a convex combination of trees in a family $\T$ that all connect $s$ and $t$, and such that for every other vertex $v\in V\setminus \{s,t\}$, the weighted number of trees that contain $v$ equals the coverage $y_v$ of $v$ in the stroll.
An example of a feasible solution $(x,y)$ and a decomposition satisfying the properties of \cref{lem:decomposition} is given in \cref{fig:decomposition}.

\begin{figure}[ht]
\centering
\captionsetup[subfigure]{position=b}
\subcaptionbox{Solution $(x,y)$ with $x_e=\sfrac14$ for dotted edges, \mbox{$x_e=\sfrac12$} for dashed edges, and $x_e=\sfrac34$ for solid edges. Likewise, \mbox{$y_v=\sfrac14$} for blank vertices, $y_v=\sfrac12$ for dashed vertices, and $y_v=\sfrac34$ for full vertices.%
\label{subfig:stroll_sol}}{%
\begin{tikzpicture}[scale=0.5, yscale=.72]
   \node[opacity=1] at (-0.5, 1) {};
   \node[opacity=1] at (16.5,1) {};
   \node[terminal, half node, label=175:$s$] at (1, 1) (s) {};
   \node[terminal, half node, label=5:$t$] at (15, 1) (t) {};

   \pgfmathsetseed{3}
   \node[inner, three quarters node] at ($(4, 3.5)+(0.5*rand, 0.5*rand)$) (v1) {\name{v_1}};
   \node[inner, three quarters node] at ($(6, 5.5)+(0.5*rand, 0.5*rand)$) (v2) {\name{v_2}};
   \node[inner, half node] at ($(6, 1.5)+(0.5*rand, 0.5*rand)$) (v3) {\name{v_3}};
   \node[inner, three quarters node] at ($(10, 5.5)+(0.5*rand, 0.5*rand)$) (v4) {\name{v_4}};
   \node[inner, half node] at ($(10, 1.5)+(0.5*rand, 0.5*rand)$) (v5) {\name{v_5}};
   \node[inner, three quarters node] at ($(12, 3.5)+(0.5*rand, 0.5*rand)$) (v6) {\name{v_6}};
   \node[inner, quarter node] at (3.5, -1) (v7) {\name{v_7}};
   \node[inner, quarter node] at (7.2, -0.9) (v8) {\name{v_8}};
   \node[inner, quarter node] at (9.5, -0.5) (v9) {\name{v_9}};
   \node[inner, quarter node] at (13, -1.5) (v10) {\name{v_{10}}};

   \draw[edge, three quarters] (s) -- (v1) node [edgeweight] (weight1) {$\frac 34$};
   \draw[edge, half] (v1) -- (v2) node [edgeweight] (weight2) {$\frac 12$};
   \draw[edge, quarter] (v1) -- (v3) node [edgeweight] (weight3) {$\frac 14$};
   \draw[edge, quarter] (v2) -- (v3) node [edgeweight] (weight4) {$\frac 14$};
   \draw[edge, three quarters] (v2) -- (v4) node [edgeweight] (weight5) {$\frac 34$};
   \draw[edge, half] (v3) -- (v5) node [edgeweight] (weight6) {$\frac 12$};
   \draw[edge, quarter] (v4) -- (v5) node [edgeweight] (weight7) {$\frac 14$};
   \draw[edge, half] (v4) -- (v6) node [edgeweight] (weight8) {$\frac 12$};
   \draw[edge, quarter] (v5) -- (v6) node [edgeweight] (weight9) {$\frac 14$};
   \draw[edge, three quarters] (v6) -- (t) node [edgeweight] (weight10) {$\frac 34$};

   \draw[edge, quarter] (s) -- (v7) node [edgeweight] (weight11) {$\frac 14$};
   \draw[edge, quarter] (v7) -- (v8) node [edgeweight] (weight12) {$\frac 14$};
   \draw[edge, quarter] (v8) -- (v9)node [edgeweight] (weight13) {$\frac 14$};
   \draw[edge, quarter] (v9) -- (v10) node [edgeweight] (weight14) {$\frac 14$};
   \draw[edge, quarter] (v10) -- (t) node [edgeweight] (weight15) {$\frac 14$};
\end{tikzpicture} %
}
\hfill
\subcaptionbox{A decomposition of the solution $(x,y)$ given in (\subref{subfig:stroll_sol}) into four trees with uniform weight $\mu\equiv\sfrac14$ satisfying the properties of \cref{lem:decomposition}.}{%
\begin{tikzpicture}[every path/.append style={thick}, scale=0.21, yscale=0.72, terminal/.append style={inner sep=.9mm}, inner/.append style={inner sep=.6mm}]

\begin{scope}[prefix node name=T1]
   \footnotesize
   \node[terminal] at (1, 1) (s){};
   \node[terminal] at (15, 1) (t){};

   \node[inner] at (3.5, -1) (v7) {\name{v_7}};
   \node[inner] at (7.2, -0.9) (v8) {\name{v_8}};
   \node[inner] at (9.5, -0.5) (v9) {\name{v_9}};
   \node[inner] at (13, -1.5) (v10) {\name{v_{10}}};

   \draw (T1 s) -- (T1 v7) -- (T1 v8) -- (T1 v9) -- (T1 v10) -- (T1 t);

\end{scope}

\begin{scope}[shift ={(0, 8)}, prefix node name=T2]
   \footnotesize
   \node[terminal] at (1, 1) (s){};
   \node[terminal] at (15, 1) (t){};

   \pgfmathsetseed{3}
   \node[inner] at ($(4, 4)+(0.5*rand, 0.5*rand)$) (v1) {\name{v_1}};
   \node[inner] at ($(6, 6)+(0.5*rand, 0.5*rand)$) (v2) {\name{v_2}};
   \node[inner] at ($(6, 2)+(0.5*rand, 0.5*rand)$) (v3) {\name{v_3}};
   \node[inner] at ($(10, 6)+(0.5*rand, 0.5*rand)$) (v4) {\name{v_4}};
   \node[inner] at ($(10, 2)+(0.5*rand, 0.5*rand)$) (v5) {\name{v_5}};
   \node[inner] at ($(12, 4)+(0.5*rand, 0.5*rand)$) (v6) {\name{v_6}};

   \draw (T2 s) -- (T2 v1) -- (T2 v3) -- (T2 v5) -- (T2 v6) -- (T2 t);
   \draw (T2 v5) -- (T2 v4) -- (T2 v2);
\end{scope}

\begin{scope}[shift = {(18,0)}, prefix node name=T3]

   \footnotesize
   \node[terminal] at (1, 1) (s){};
   \node[terminal] at (15, 1) (t){};

   \pgfmathsetseed{3}
   \node[inner] at ($(4, 4)+(0.5*rand, 0.5*rand)$) (v1) {\name{v_1}};
   \node[inner] at ($(6, 6)+(0.5*rand, 0.5*rand)$) (v2) {\name{v_2}};
   \node[inner] at ($(6, 2)+(0.5*rand, 0.5*rand)$) (v3) {\name{v_3}};
   \node[inner] at ($(10, 6)+(0.5*rand, 0.5*rand)$) (v4) {\name{v_4}};
   \node[inner] at ($(10, 2)+(0.5*rand, 0.5*rand)$) (v5) {\name{v_5}};
   \node[inner] at ($(12, 4)+(0.5*rand, 0.5*rand)$) (v6) {\name{v_6}};

   \draw (T3 s) -- (T3 v1) -- (T3 v2) -- (T3 v4) -- (T3 v6) -- (T3 t);
   \draw (T3 v2) -- (T3 v3) -- (T3 v5);

\end{scope}

\begin{scope}[shift={(18,8)}, prefix node name=T4]
   \footnotesize
   \node[terminal] at (1, 1) (s){};
   \node[terminal] at (15, 1) (t){};

   \pgfmathsetseed{3}
   \node[inner] at ($(4, 4)+(0.5*rand, 0.5*rand)$) (v1) {\name{v_1}};
   \node[inner] at ($(6, 6)+(0.5*rand, 0.5*rand)$) (v2) {\name{v_2}};
   \node[inner, opacity=0] at ($(6, 2)+(0.5*rand, 0.5*rand)$) (v3) {\name{v_3}};
   \node[inner] at ($(10, 6)+(0.5*rand, 0.5*rand)$) (v4) {\name{v_4}};
   \node[inner, opacity=0] at ($(10, 2)+(0.5*rand, 0.5*rand)$) (v5) {\name{v_5}};
   \node[inner] at ($(12, 4)+(0.5*rand, 0.5*rand)$) (v6) {\name{v_6}};

   \draw (T4 s) -- (T4 v1) -- (T4 v2) -- (T4 v4) -- (T4 v6) -- (T4 t);

\end{scope}

\end{tikzpicture} %
}
\caption{A solution $(x,y)\in\Pstroll$ along with a decomposition into trees, exemplifying \cref{lem:decomposition}.}
\label{fig:decomposition}
\end{figure}

After applying \cref{lem:decomposition} to all strolls $(x^i, y^i)\in\Pstroll[$d_i$-$d_{i+1}$]$ obtained from an optimal solution of the \ref{eq:lp_relaxation}, we choose one tree from each of the decompositions and consider the (\mbox{multi}\nobreakdash-)union of all edges obtained this way.
This results in a graph that already contains a closed walk with visits at~$d_1,\ldots,d_k$ in this order, giving the basis for our construction of an \OTSP solution.
Also, we can easily bound the expected cost of the edge set obtained in this way by randomly choosing the trees with marginals given by the weights~$\mu$ from \cref{lem:decomposition}.
To obtain an actual \OTSP solution, the missing steps are to \begin{enumerate*}
\item connect vertices that are not covered by any of the trees $T_i$,
\item perform parity correction to guarantee that there exists an Eulerian tour, and
\item shortcut appropriately to obtain an actual \OTSP solution.
\end{enumerate*}
Altogether, this leads to the randomized \cref{alg:ordered_tsp} as laid out below; also see \cref{fig:main_example} for an example illustration of the different edge sets that are constructed in \cref{alg:ordered_tsp}.

\begin{algorithm2e}[!b]
   \KwIn{\OTSP instance $(G, c, \{d_1,\ldots,d_k\})$ on graph $G=(V,E)$.}
   \medskip

   \DontPrintSemicolon
   Compute an optimal solution~$(x^i, y^i)_{i\in\{1,\ldots,k\}}$ to the \ref{eq:lp_relaxation}.\;
   \ForEach{$i \in \{1, \dots, k\}$}
   {
      Apply \cref{lem:decomposition} to decompose~$(x^i,y^i)$ into trees~$\T_i$ with weights $\mu^i$. \label{algline:decomposition}\;
      Sample one tree~$T_i$ from $\T_i$ with marginals given by~$\mu^i$.\label{algline:sample_T}\;
   }
   Compute a minimum-cost edge set $F\subseteq E$ such that the multigraph
   $$H \coloneqq \left(V, F\cup\bigdotcup_{i \in \{1, \dots, k\}} E[T_i]\right)$$
   is connected.\label{algline:connector_F}\;
   Let~$Q = \operatorname{odd}(H)$ and compute a minimum cost~$Q$-join~$J$ in~$G$. \label{algline:join_computation}\;
   \Return Spanning cycle $C$ in $G$ obtained from~$H\dotcup J$ through \cref{lem:spanning_cycle}.\label{algline:H_cup_J}\;
   \caption{A randomized approximation algorithm for \OTSP}
   \label{alg:ordered_tsp}
\end{algorithm2e}

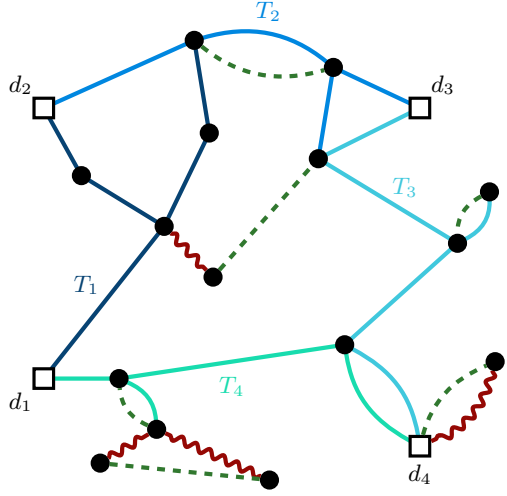
\begin{figure}
\centering
\begin{tikzpicture}[scale=-0.5, yscale=0.9]

\definecolor{T1_col}{HTML}{074373}
\definecolor{T2_col}{HTML}{0087e0}
\definecolor{T3_col}{HTML}{40c8dd}
\definecolor{T4_col}{HTML}{18dcae}

   \footnotesize
   \begin{scope}[every node/.style={inner}]
      \node (1) at (2.7,1.5) {};
      \node (2) at (2.3,-1.2) {};
      \node (3) at (-1,4) {};
      \node (4) at (2,7) {};
      \node (5) at (5.5,5) {};
      \node (6) at (5.6,0.74) {};
      \node (7) at (4,11) {};
      \node (8) at (6.8,3.5) {};
      \node (9) at (9,2) {};
      \node (11) at (8,8) {};
      \node (12) at (6,-2) {};
      \node (13) at (-1.85,2.48) {};
      \node (14) at (-2,7.5) {};
      \node (15) at (7,9.5) {};
      \node (16) at (8.5,10.5) {};
   \end{scope}

   \begin{scope}[every node/.style={terminal}]
      \node[label={[circle, inner sep=0pt]45:$d_3$}] (d3) at (0,0) {};
      \node[label={[circle, inner sep=0pt]-90:$d_4$}] (d4) at (0,10) {};
      \node[label={[circle, inner sep=0pt]-135:$d_1$}] (d1) at (10, 8) {};
      \node[label={[circle, inner sep=0pt]135:$d_2$}] (d2) at (10,0) {};
   \end{scope}

   \begin{scope}[on background layer]
	\begin{scope}[every path/.style={tree_edge, T3_col}]
      \draw (d3) -- (1);
      \draw (1) -- node[above right = -3pt] (text) {$T_3$} (3);
      \draw (3) -- (4);
      \draw (4) to [bend left=25] (d4);
      \draw (3) to [bend right] (13);
	\end{scope}
	\begin{scope}[every path/.style={tree_edge, T4_col}]
      \draw (d4) to [bend left=25] (4);
      \draw (4) -- node[below = 2pt] (text) {$T_4$}(11);
      \draw (11)  -- (d1);
      \draw (11) to [bend left] (15);
	\end{scope}
	\begin{scope}[every path/.style={tree_edge, T1_col}]
      \draw (d1) -- node [above left=-2pt] (text) {$T_1$} (8);
      \draw (8) -- (9);
      \draw (9)  -- (d2);
      \draw (8) -- (6);
      \draw (6) -- (12);
	\end{scope}
	\begin{scope}[every path/.style={tree_edge, T2_col}]
      \draw (d2) -- (12);
      \draw (12) to[bend left] node [above = 1pt] (text) {$T_2$} (2);
      \draw (2) -- (d3);
      \draw (2) -- (1);
	\end{scope}
   \end{scope}

   \begin{scope}[every path/.style={connector edge}]
      \draw (14) to [bend left=25] coordinate (example_connector) (d4);
      \draw (16) -- (15);
      \draw (7) -- (15);
      \draw (8) to (5);
   \end{scope}

   \begin{scope}[every path/.style={join edge}]
      \draw (16) to coordinate (example_join) (7);
      \draw (d4) to [bend left=25] (14);
      \draw (15) to [bend left] (11);
      \draw (12) to [bend right] (2);
      \draw (5) to (1);
      \draw (13) to [bend right] (3);
   \end{scope}

\end{tikzpicture} \caption{Exemplifying the construction of Eulerian graph $H\dotcup J$ from \cref{alg:ordered_tsp}: Trees $T_1,T_2,T_3,T_4$ drawn as solid blueish edges, the edge set $F$ connecting all vertices to the trees drawn as curly red edges, and the $\odd(H)$-join $J$ drawn as dashed green edges.}
\label{fig:main_example}
\end{figure}

We first show that \cref{alg:ordered_tsp} gives the guarantees claimed by \cref{thm:main} in expectation and---even stronger---with respect to the value $\costLP$ of the \ref{eq:lp_relaxation}, as stated in the subsequent theorem.
In \cref{subsec:derandomization}, we show that \cref{alg:ordered_tsp} admits an immediate derandomization using the method of conditional expectation, thereby completing the proof of \cref{thm:main}.

\begin{theorem}\label{thm:main_randomized}
Let~$\costLP$ be the cost of an optimum solution of the \ref{eq:lp_relaxation}.
\cref{alg:ordered_tsp} returns in polynomial time an \OTSP solution $C$ satisfying
$$
\Exp[c(E[C])] \leq \left(\frac32 + \frac1\e\right) \cdot \costLP\enspace.
$$
\end{theorem}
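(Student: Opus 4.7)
The plan is to split the cost of~$C$ into the contributions of the three edge sets assembled by \cref{alg:ordered_tsp} and bound each separately. By \cref{lem:spanning_cycle}, $c(E[C])\leq c(E[H])+c(J) = \sum_{i=1}^k c(E[T_i])+c(F)+c(J)$, and the theorem will follow from
\begin{equation*}
\Exp\Bigl[\textstyle\sum_{i=1}^k c(E[T_i])\Bigr]=\costLP, \qquad \Exp[c(F)]\leq \tfrac{1}{\e}\costLP, \qquad c(J)\leq \tfrac{1}{2}\costLP.
\end{equation*}
The first of these is immediate from the marginal property in \cref{lem:decomposition}: since $\Exp[\chi^{E[T_i]}]=\sum_{T}\mu_T^i\chi^{E[T]}=x^i$, linearity gives $\Exp[\sum_i c(E[T_i])]=c^\top\sum_i x^i = \costLP$. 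The join bound holds deterministically, as it is a direct application of \cref{cor:tree_join_cost}\,\eqref{item:join} to the even-cardinality set $Q=\odd(H)$.

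The heart of the proof is the bound on~$\Exp[c(F)]$. First I would observe that $\bigcup_i E[T_i]$ is already connected: each~$T_i$ contains $d_i$ and $d_{i+1}$ by \cref{lem:decomposition}, so consecutive trees share a vertex and their union forms a single connected component containing $D\coloneqq\{d_1,\ldots,d_k\}$. Thus $F$ only needs to attach the isolated set $I\coloneqq V\setminus\bigcup_i V[T_i]$ to that component. To produce a cheap candidate connector, I would fix a minimum cost spanning tree~$T^*$ of~$G$, root it at~$d_1$, and take
\begin{equation*}
F^*\coloneqq\bigl\{\{v,\operatorname{parent}(v)\}:v\in I\bigr\}.
\end{equation*}
A walk-up argument shows $F^*\cup\bigcup_i E[T_i]$ is connected: from any $v\in I$ the edges of~$F^*$ trace a path climbing~$T^*$ through isolated ancestors until the first non-isolated ancestor is reached, which must exist because the root $d_1\in D$ is deterministically non-isolated. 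Since $F$ is chosen with minimum cost, $c(F)\leq c(F^*)$.

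It then remains to bound $\Exp[c(F^*)]$. By independence of the samples $T_1,\ldots,T_k$ together with the marginal property of \cref{lem:decomposition},
\begin{equation*}
\Prob[v\in I]=\prod_{i=1}^k\bigl(1-\Prob[v\in V[T_i]]\bigr),
\end{equation*}
which equals $0$ for $v\in D$ (each $d_j$ lies in both $T_{j-1}$ and $T_j$) and is at most $\prod_i\e^{-y_v^i}=\e^{-1}$ for $v\in V\setminus D$, using the \ref{eq:lp_relaxation} equality $\sum_i y_v^i = 1$ and the inequality $1-y\leq\e^{-y}$. Substituting yields
\begin{equation*}
\Exp[c(F)]\leq\sum_{v\neq d_1}c(\{v,\operatorname{parent}(v)\})\cdot\Prob[v\in I]\leq\tfrac{1}{\e}c(T^*)\leq\tfrac{1}{\e}\costLP,
\end{equation*}
where the last step invokes \cref{cor:tree_join_cost}\,\eqref{item:tree}. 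Polynomial runtime is inherited from the efficiency of the LP, \cref{lem:decomposition}, minimum spanning tree, minimum $T$-join, and Eulerian shortcutting. I expect the main technical point to be the clean combination of independent sampling across strolls with the LP equality $\sum_i y_v^i=1$, which is exactly what produces the crucial factor~$\e^{-1}$; the walk-up connector~$F^*$ anchored at the always-covered root~$d_1$ then translates this marginal bound into a cost bound against~$c(T^*)$, with no term left over.
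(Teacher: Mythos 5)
Your proposal is correct and follows essentially the same route as the paper: bound the cost of $H\dotcup J$ by the three contributions, use the decomposition marginals for $\sum_i c(E[T_i])$, Wolsey/Held--Karp feasibility for the $T$-join, and the parent-edge connector $F^*$ in an MST rooted at $d_1$ together with $\prod_i(1-y_v^i)\le \e^{-\sum_i y_v^i}=\e^{-1}$ for the expected cost of $F$. The only difference is presentational: the paper packages the tree-union connectivity, the isolation-probability bound, and the $F^*$ construction into \cref{lem:tree_cost} and \cref{lem:connector_cost}, whereas you inline those arguments directly.
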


To prove \cref{thm:main_randomized}, we first study the random graph $H_0\coloneqq (V, \bigdotcup_{i \in \{1, \dots, k\}}E[T_i])$ obtained from taking the union of trees $T_i\in\T_i$ for all $i\in\{1,\ldots,k\}$ as sampled in \cref{alg:ordered_tsp}.
In order for the following statements to also be applicable in a proof of \cref{thm:indep_orders}, we refer to the tree distributions of the type generated in \cref{alg:ordered_tsp} as \emph{connecting tree distributions}.
\begin{definition}[Connecting tree distribution]
   Let~$G = (V, E)$ be a graph and let~$d_1, \dots, d_k \in V$.
   A connecting tree distribution~$(\T_i, \mu^i)_{i \in \{1, \dots, k\}}$ consists of a family~$\T_i$ of subtrees of~$G$ and marginals~$\mu^i \colon \T_i \to (0, 1]$ for every $i\in\{1,\ldots,k\}$ with the following properties.
   \begin{enumerate}
      \item $\sum_{T \in \T_i} \mu^i_T = 1$ for all~$i \in \{1, \dots, k\}$.
      \item $V[T] \cap \{d_1, \dots, d_k\} = \{d_i, d_{i + 1}\}$ for all~$T \in \T_i$ and~$i \in \{1, \dots, k\}$.
      \item \label{defitem:joint_coverage} $\sum_{i = 1}^k \sum_{T \in \T_i \colon v \in V[T]} \mu^i_T = 1$ for all $v\in V\setminus\{d_1,\ldots,d_k\}$.
   \end{enumerate}
\end{definition}

The distributions $(\T_i, \mu^i)$ obtained in \cref{alg:ordered_tsp} by applying \cref{lem:decomposition} indeed satisfy the constraints of the above definition; in particular, \cref{defitem:joint_coverage} is fulfilled because
\begin{equation*}
   \sum_{i=1}^k\sum_{T\in\T_i\colon v \in V[T]} \mu^i_T = \sum_{i=1}^k y_v^i = 1 \quad \forall v\in V\setminus\{d_1,\ldots,d_k\}\enspace,
\end{equation*}
where the first equality follows from \cref{lem:decomposition}, and the second one is implied by constraints of $\Pstroll[$d_i$-$d_{i+1}$]$.

\begin{lemma}\label{lem:tree_cost}
Let~$G = (V, E)$ be a graph,~$d_1, \dots, d_k \in V$, and let $(\T_i, \mu^i)$ be a connecting tree distribution.
\begin{enumerate}
\item \label{itm:connecting_components} For any choice of trees $T_i\in\T_i$ for $i\in\{1,\ldots,k\}$, the multigraph $H_0\coloneqq (V, \bigdotcup_{i \in \{1, \dots, k\}}E[T_i])$ consists of one large connected component and potentially several isolated vertices.
The large connected component contains a walk with visits at $d_1,\ldots,d_k$ in this order that can be constructed efficiently from the trees $T_i$.
\item \label{itm:isolation_probability} If, in the above construction, the trees $T_i$ are sampled with marginals $\mu^i$, we have that for all $v\in V\setminus\{d_1,\ldots,d_k\}$,
\begin{equation*}
\Prob[v\text{ is isolated in }H_0] \leq \frac1\e\enspace.
\end{equation*}
\end{enumerate}
\end{lemma}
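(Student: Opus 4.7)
The plan is to prove the two parts separately, both directly from the definition of a connecting tree distribution. For \cref{itm:connecting_components}, I would first observe that every tree $T_i$ contains both $d_i$ and $d_{i+1}$. Hence the unique $d_i$-to-$d_{i+1}$ path inside $T_i$ is a walk in $H_0$, and concatenating these $k$ paths for $i = 1, \ldots, k$ produces a closed walk in $H_0$ visiting $d_1, \ldots, d_k$ in the required order; each subpath is found by a linear-time traversal of $T_i$. Moreover, every non-isolated vertex of $H_0$ lies in some $T_i$ and is connected within that tree to $d_i$, and hence to the walk. Consequently all non-isolated vertices share a single connected component with $d_1, \ldots, d_k$, so the remaining components are all singletons, yielding the claim.

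For \cref{itm:isolation_probability}, I would fix $v \in V \setminus \{d_1, \ldots, d_k\}$ and set $p_i \coloneqq \sum_{T \in \T_i \colon v \in V[T]} \mu^i_T = \Prob[v \in V[T_i]]$. Since the trees $T_1, \ldots, T_k$ are sampled independently across $i$,
\[
\Prob[v \text{ is isolated in } H_0] \;=\; \prod_{i=1}^k (1 - p_i)\enspace.
\]
The crucial observation is that property~(iii) of a connecting tree distribution forces $\sum_{i=1}^k p_i = 1$. Subject to this equality constraint and $p_i \in [0, 1]$, the product $\prod_i (1 - p_i)$ is maximized at the symmetric point $p_i = \sfrac1k$ (by concavity of $\log(1 - p)$, i.e., Jensen's inequality applied to $\sum_i \log(1 - p_i)$), giving the bound $(1 - \sfrac1k)^k \leq \sfrac1\e$, where the last inequality follows from $1 - x \leq \e^{-x}$ with $x = \sfrac1k$.

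The only step with any real content is recognizing that the joint coverage property translates exactly into $\sum_i p_i = 1$; once this is in place, the independence of the sampling and the classical bound $(1 - \sfrac1k)^k \leq \sfrac1\e$ do all the work. I therefore do not expect any substantial obstacle in executing this proof.
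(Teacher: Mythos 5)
Your proposal is correct and follows essentially the same approach as the paper for both parts. The only cosmetic difference is in the final computation of part~\ref{itm:isolation_probability}: the paper applies $1-t\leq\e^{-t}$ termwise to get $\prod_i(1-p_i)\leq\exp(-\sum_i p_i)=\sfrac1\e$ directly, whereas you first invoke Jensen's inequality on $\log(1-p)$ to reduce to the symmetric point $p_i=\sfrac1k$ and then apply $1-\sfrac1k\leq\e^{-1/k}$ -- a valid but slightly longer route to the same bound via the same underlying inequality.
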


\begin{proof}
For \cref{itm:connecting_components} observe that each tree~$T_i$ is connected within itself by definition and, as it contains~$d_i$ and~$d_{i + 1}$, the union of all trees form one large connected component, while all other components must be isolated vertices.
Also, because each tree $T_i$ contains a $d_i$-$d_{i + 1}$ path, we may concatenate these paths to obtain the desired walk with visits at $d_1$, \ldots, $d_k$ in this order.

To prove~\cref{itm:isolation_probability}, we calculate the probability that a vertex~$v\in V\setminus\{d_1,\ldots,d_k\}$ is isolated in~$H_0$.
First, note that for any such vertex $v$ and any $i\in \{1,\ldots,k\}$, we have
\begin{equation*}
   \Prob\left[v \notin V\left[T_i\right]\right] = 1 - \sum_{T\in\T_i \colon v \in V[T]} \mu^i_T \enspace .
\end{equation*}
Thus, the probability that a vertex $v\in V\setminus\{d_1,\ldots,d_k\}$ is not contained in any tree $T_i$ for $i\in\{1,\ldots,k\}$, and hence is isolated in $H_0$, can be bounded as follows:
\begin{multline*}
   \Prob\!\left[v \notin \bigcup_{i = 1}^k V[T_i]\right]
   =    \prod_{i = 1}^{k} \Prob[v \notin V\left[T_i\right]]
   =    \prod_{i = 1}^{k} \left(1 - \sum_{T\in\T_i \colon v \in V[T]} \mu^i_T\right)\\
   \leq \exp\!\left(-\sum_{i=1}^k\sum_{T\in\T_i \colon v \in V[T]} \mu^i_T\right)
   =  \frac{1}{\e}\enspace,
\end{multline*}
where we used that $1-t\leq \exp(-t)$ for all $t\in\mathbb{R}$, and $\sum_{i = 1}^k \sum_{T \in \T_i \colon v \in V[T]} \mu^i_T = 1$ because $(\T_i,\mu^i)$ is a connecting tree distribution.
\end{proof}

Next, we bound the cost of the minimum-cost connector $F$ computed in \cref{algline:connector_F} of \cref{alg:ordered_tsp}.

\begin{lemma}\label{lem:connector_cost}
Let~$G = (V, E)$ be a graph,~$d_1, \dots, d_k \in V$, and let $T$ be a minimum spanning tree of~$G$.
\begin{enumerate}
\item \label{itm:cost_of_f_prime} For all $v\in V\setminus\{d_1\}$, let $e_v$ denote the unique edge outgoing of $v$ when orienting $T$ towards $d_1$.
For every graph $H_0$ on the vertex set $V$ with components that are---up to possibly the component containing~$d_1$---singleton vertices,
the minimum-cost edge set $F$ that connects $H_0$ satisfies
\begin{equation*}
c(F) \leq \sum_{v\text{ isolated in }H_0} c(e_v)\enspace.
\end{equation*}
\item \label{itm:cost_of_f}
Let $(\T_i, \mu^i)$ for $i\in\{1,\ldots,k\}$ be a connecting tree distribution.
If the trees $T_i\in\T_i$ are sampled with marginals $\mu^i$ and $H_0\coloneqq (V, \bigdotcup_{i \in \{1, \dots, k\}}E[T_i])$, we obtain
\begin{equation*}
\Exp[c(F)] \leq \frac1\e c(T)\enspace.
\end{equation*}
\end{enumerate}
\end{lemma}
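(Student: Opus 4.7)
My plan for (i) is to exhibit an explicit feasible connector $F'\subseteq E[T]$ and invoke optimality of $F$. Namely, I set $F'\coloneqq\{e_v : v\text{ is isolated in }H_0\}$. To verify that $(V, E(H_0)\cup F')$ is connected, I fix any isolated vertex $v$ and trace the oriented edges $e_v, e_{v'},\ldots$ along the unique $v$-$d_1$ path in $T$: every isolated vertex on this path contributes its outgoing edge to $F'$, whereas any non-isolated vertex on the path belongs (by the structural assumption on $H_0$) to the single large component containing $d_1$. In the worst case the trace reaches $d_1$ itself, which is in the large component; either way $v$ ends up connected to that component. Hence $F'$ is a valid connector, and minimality of $F$ yields
$$c(F)\le c(F')=\sum_{v\text{ isolated in }H_0}c(e_v).$$

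For (ii), I take expectations in the bound from (i) and appeal to \cref{lem:tree_cost}. By linearity of expectation,
$$\Exp[c(F)]\le \sum_{v\in V\setminus\{d_1\}} c(e_v)\cdot\Prob[v\text{ isolated in }H_0].$$
By definition of a connecting tree distribution, each $d_j$ with $j\in\{2,\ldots,k\}$ is contained in the trees $T_{j-1}$ and $T_j$, so it is never isolated and the corresponding terms vanish. For any remaining $v\in V\setminus\{d_1,\ldots,d_k\}$, part (ii) of \cref{lem:tree_cost} gives $\Prob[v\text{ isolated}]\le\sfrac1\e$. Since the map $v\mapsto e_v$ is a bijection between $V\setminus\{d_1\}$ and $E[T]$, substitution yields
$$\Exp[c(F)]\le \frac{1}{\e}\sum_{v\in V\setminus\{d_1\}}c(e_v)=\frac{c(T)}{\e}.$$

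The only delicate point is the connectivity argument for $F'$ in (i): it crucially uses that all components of $H_0$ other than the one containing $d_1$ are singletons. Without this assumption, starting the trace $e_v, e_{v'},\ldots$ at a vertex contained in a small non-singleton component would attach only that vertex (not the whole component) to the large component, breaking the construction. Everything else is bookkeeping: linearity of expectation, the bijection between non-root vertices of $T$ and $E[T]$, and the two facts about connecting tree distributions already encapsulated in \cref{lem:tree_cost}.
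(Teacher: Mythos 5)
Your proof is correct and follows essentially the same route as the paper: for part (i) you exhibit the same feasible connector $F'=\{e_v : v\text{ isolated}\}$ and argue connectivity via the oriented path to $d_1$ (the paper phrases the identical argument as an induction on predecessors), and for part (ii) you combine linearity of expectation with \cref{lem:tree_cost}\ref{itm:isolation_probability} and the bijection $v\mapsto e_v$ exactly as the paper does. Your explicit remark that the $d_j$ for $j\ge 2$ are never isolated is a small piece of care the paper elides, but it does not constitute a different approach.
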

\begin{proof}
In order to prove \cref{itm:cost_of_f_prime}, we construct a feasible connecting edge set $F'$ as the set of all edges~$e_v$ for which~$v$ is an isolated vertex.
Then $H_0 \cup F'$ is indeed connected, because each isolated vertex of~$H_0$ is connected to its predecessor in $T$ by an edge of~$F'$, hence inductively, the component of $H_0$ containing~$d_1$ can be reached along edges of $F'$.
As the minimum-cost connector~$F$ has cost at most $c(F')$, we have
\begin{equation*}
   c(F) \leq c(F') \leq \sum_{v\text{ isolated in }H_0} c(e_v)\enspace.
\end{equation*}
To prove \cref{itm:cost_of_f}, we note that in this case, $H_0$ consists of one large connected component and some isolated vertices by \cref{itm:connecting_components} of \cref{lem:tree_cost}.
Using \cref{itm:isolation_probability} of \cref{lem:tree_cost} on top of the above, we get
\begin{equation*}
   \Exp[c(F)] \leq \Exp[c(F')] = \sum_{v \in V\setminus\{d_1\}} \Prob[v\text{ isolated in }H_0] \cdot c(e_v) \leq \frac 1 \e \sum_{v \in V\setminus\{d_1\}} c(e_v) = \frac 1 \e c(E[T]) \enspace. \qedhere
\end{equation*}
\end{proof}

The cost of the $\odd(H)$-join $J$ constructed in \cref{algline:connector_F} of \cref{alg:ordered_tsp} can be bounded by $\frac12 c^\top x$ by \cref{item:join} of \cref{cor:tree_join_cost}.
Hence, to complete the analysis of \cref{alg:ordered_tsp}, it is left to show that from the Eulerian graph~$H\dotcup J$ constructed in \cref{algline:join_computation} of \cref{alg:ordered_tsp}, we can obtain an \OTSP solution of no larger cost.
We remark that such a step has also been used by \textcite{boeckenhauer_2013_improved}; we repeat it here explicitly and give a slightly different proof for completeness.
In the proof, we repeatedly use the operation of \emph{shortcutting} a vertex $v$ on a walk, which is the following:
If the predecessor and successor of $v$ on the walk are $u$ and $w$, respectively, we delete the edges $\{u,v\}$ and $\{v,w\}$ from the walk and add the direct edge $\{u,w\}$ instead.
It is clear that this operation results in a walk again; furthermore, by the triangle inequality, the costs of the walk do not increase under such operations.

\begin{lemma}\label{lem:spanning_cycle}
Let $G=(V,E)$ be complete graph with metric edge costs, and let $d_1,\ldots,d_k\in V$ be distinct.
Given an undirected connected Eulerian multigraph $M=(V,E_M)$ together with a closed walk in $M$ with visits at $d_1, \dots, d_k$ in this order, we can in polynomial time determine a spanning cycle $C$ in $G$ with visits at $d_1, \dots, d_k$ in this order of cost at most $c(E_M)$.
\end{lemma}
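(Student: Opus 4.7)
My approach is to first extend the given closed walk $W$ to a full Eulerian circuit $W_0$ of $M$ that still contains $W$ as an ordered sub-sequence, and then shortcut $W_0$ carefully to obtain a Hamiltonian cycle in $G$ that inherits the required order of visits to $d_1, \ldots, d_k$.

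\textbf{Step 1 (extending $W$ to an Eulerian circuit of $M$).} I would exploit that in the natural interpretation of a walk in a multigraph (which is satisfied in the application from \cref{alg:ordered_tsp}, since the given walk is obtained by concatenating $d_i$-$d_{i+1}$ paths in edge-disjoint copies of the trees $T_i \subseteq H \subseteq M$), the walk $W$ uses every edge of $M$ at most as many times as $M$ contains it. Hence the residual multigraph $M' \coloneqq M - W$, obtained by subtracting from each edge's $M$-multiplicity its $W$-multiplicity, has all even degrees. In particular, every connected component of $M'$ admits an Eulerian sub-circuit. The key claim is that each such component shares at least one vertex with the vertex set $V(W)$ of $W$: otherwise, for a hypothetical component $K$ disjoint from $V(W)$, the shortest $M$-path from $K$ to $V(W)$ would consist of edges none of which $W$ traverses (since $W$ avoids all their endpoints except the last), so this path would lie in $M'$ and connect $K$ to $V(W)$ inside $M'$, a contradiction. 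Given this, I would iteratively splice the Eulerian sub-circuit of each component $K$ into the current walk at any occurrence of a shared vertex, producing a closed walk that uses each edge of $M$ exactly as often as $M$ contains it. The resulting Eulerian circuit $W_0$ has cost exactly $c(E_M)$ and, because splicing only inserts sub-walks, still contains $W$ as an ordered sub-sequence.

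\textbf{Step 2 (shortcutting while preserving the order).} For each $i \in \{1, \ldots, k\}$ let $q_i$ be the position of $W$'s visit to $d_i$ inside $W_0$; by construction $q_1 < q_2 < \cdots < q_k$. For every $v \in V \setminus \{d_1, \ldots, d_k\}$ let $p_v$ be the position of its first occurrence in $W_0$ (which exists since $W_0$ traverses all of $V$). Sorting the positions $\{q_1, \ldots, q_k\} \cup \{p_v : v \notin \{d_1, \ldots, d_k\}\}$ in increasing order of their positions in $W_0$ yields a sequence $u_1, u_2, \ldots, u_n$ in which each vertex of $V$ appears exactly once and in which $d_1, \ldots, d_k$ occur in the required order. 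I define $C$ to be the cycle in $G$ given by $u_1, u_2, \ldots, u_n, u_1$. Finally, between any two consecutive chosen positions lies a sub-walk of $W_0$ connecting $u_l$ and $u_{l+1}$; iterating the triangle inequality along this sub-walk yields $c(\{u_l, u_{l+1}\})$ bounded by its cost, and summing over $l$ (with indices mod $n$) gives $c(E[C]) \leq c(W_0) = c(E_M)$.

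The main obstacle is Step 1, specifically the argument that every connected component of the residual multigraph $M'$ touches $V(W)$: without this, splicing could not reassemble all of $M$ into a single Eulerian circuit while preserving $W$. Everything else amounts to standard triangle-inequality shortcutting combined with a small amount of position-bookkeeping to ensure that the $W$-visits to $d_1, \ldots, d_k$ survive as representatives of those vertices in the final Hamiltonian cycle.
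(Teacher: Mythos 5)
Your proof is correct and is essentially the same argument as the paper's: both maintain the given walk as an ordered sub-sequence of an Euler circuit of $M$ (you assemble the full circuit up front by splicing all residual components; the paper interleaves splicing with shortcutting) and then shortcut while keeping the marked occurrences of $d_1,\ldots,d_k$ as the surviving representatives. The ``key claim'' you isolate in Step~1 is exactly what the paper dispatches with ``by connectivity of~$M$,'' and your shortest-path argument for it is sound.
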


\begin{proof}
Let~$C$ be the given closed walk on which~$d_1, \dots, d_k$ appear in this order, delete $C$ from $M$ and partition the remaining Eulerian graph into a set $\mathcal{W}$ of closed walks.
Shortcut~$C$ to a cycle while maintaining visits at $d_1, \dots, d_k$ in this order.
This can, for example, be done by traversing~$C$ starting at~$d_1$, and shortcutting
\begin{enumerate*}
\item vertices that have already been visited, and
\item vertices~$d_i$ that are not yet to be visited due to the order constraint.
\end{enumerate*}
Afterwards, as long as~$\mathcal{W}$ is non-empty, pick a closed walk~$W$ from~$\mathcal{W}$ that intersects~$C$, and let~$v$ be a vertex in the intersection.
Traversing~$W$ starting from~$v$, shortcut~$W$ to a cycle by skipping, except for~$v$ itself, all vertices that are already contained in~$C$.
Then, merge~$W$ into~$C$ by first traversing~$C$ up to (and including)~$v$, then completely traversing~$W$ until (but not including)~$v$ before continuing on~$C$, thereby including only one visit at $v$ in the updated $C$.
It is immediate that~$C$ is still a cycle after any such operation, and the vertices \mbox{$d_1,\ldots,d_k$} still appear on~$C$ once and in this order.
By connectivity of $M$, this procedure only terminates once $\mathcal{W}$ is empty, and in that case, $C$ is a spanning cycle of $G$.
Also, all steps can be implemented to run in polynomial time.
Clearly, the final length of $C$ with respect to $c$ is at most $c\left(E_M\right)$ because $c$ is metric.
\end{proof}

From the above ingredients, we can readily prove \cref{thm:main_randomized}.

\begin{proof}[Proof of \cref{thm:main_randomized}]
The solution returned by \cref{alg:ordered_tsp} is a spanning cycle~$C$ in~$G$ obtained from~$H \dotcup J$ through \cref{lem:spanning_cycle}, hence it is feasible and of cost at most~$c(E[H \dotcup J])$.
Note that the required closed walk in $H\dotcup J$ with visits at $d_1$,~\ldots, $d_k$ in this order is guaranteed and can be constructed efficiently from the trees $T_i$ by \cref{itm:connecting_components} of \cref{lem:tree_cost}.
Furthermore, by \cref{itm:cost_of_f} of \cref{lem:connector_cost} and \cref{cor:tree_join_cost}, we know that $\Exp[c(F)]\leq \sfrac1\e\cdot c(T) \leq \sfrac 1 \e \cdot \costLP$, where~$T$ is a minimum-cost spanning tree.
In addition, \cref{cor:tree_join_cost} also implies that $c(E[J])\leq \frac12 \cdot c_\text{LP}$.
Last but not least, we can express the expected cost of each~$T_i$ as
\begin{equation*}
     \Exp[c(E[T_i])]
   = \sum_{T \in \T_i} \mu^i_T c(E[T])
   = \sum_{T \in \T_i} \mu^i_T c^\top\chi^{E[T]}
   = c^\top x^i \qquad \forall i \in \{1, \dots, k\}\enspace.
\end{equation*}
Thus, by summing over all constructed trees, we obtain
         $\sum_{i=1}^k \Exp[c(E[T_i])]
   =     \sum_{i = 1}^k c^{\top} x^i
   = \costLP$.
Together, this yields the proclaimed bound
\begin{align*}
   \Exp[c(C)]
    \leq   \Exp\left[c(E[H \dotcup J])\right]
    \leq      \left(\frac{3}{2} + \frac 1 \e\right) \cdot \costLP\enspace .
\end{align*}
It remains to note that \cref{alg:ordered_tsp} can be implemented to run in polynomial time.
To start with, an optimal solution of the \ref{eq:lp_relaxation} can be found in polynomial time because $\Pstroll$ admits a polynomial-time separation oracle through polynomially many calls to a minimum-cut algorithm.
Next, the decomposition in \cref{algline:decomposition} is obtained in polynomial time, finding an optimal edge set $F$ in \cref{algline:connector_F} can be implemented by Prim's algorithm, and the~$\odd(H)$-join is well-known to be computable in polynomial time.
Finally, also the computation of the cycle~$C$ in \cref{algline:H_cup_J} is polynomial due to \cref{lem:spanning_cycle}, concluding the proof.
\end{proof}

\subsection{Derandomizing \texorpdfstring{\cref{alg:ordered_tsp}}{Algorithm~\ref{alg:ordered_tsp}}}
\label{subsec:derandomization}

To complete a proof of our main result, \cref{thm:main}, we now show how to derandomize \cref{alg:ordered_tsp} using the method of conditional expectations, which results in the following proof.

\begin{proof}[Proof of \cref{thm:main}]
   By the construction of the solution $C$ in \cref{alg:ordered_tsp}, using \cref{itm:cost_of_f_prime} of \cref{lem:connector_cost} to bound the cost of $F$, and \cref{item:join} of \cref{cor:tree_join_cost} to bound the cost of $J$, we know that
   \begin{align}
      \nonumber
      c(C) &\leq \sum_{i=1}^k c(E[T_i]) + c(F) + c(E[J]) \\
      \label{eq:cost_bound}
      &\leq \underbrace{\sum_{i=1}^k c(E[T_i]) + \sum_{v\notin\bigcup_{i=1}^k V[T_i]} c(e_v) + \frac12\cdot c_\text{LP}}_{\eqqcolon g(T_1,\ldots,T_k)}\enspace,
   \end{align}
   where we recall that $e_v$, for $v\in V\setminus\{d_1\}$, is the unique outgoing edge at $v$ when orienting a minimum-cost spanning tree of $G$ towards $d_1$.
   For \cref{thm:main_randomized}, we showed that $\Exp[g(T_1,\ldots,T_k)]\leq (\sfrac32+\sfrac1\e)\cdot c_{\text{LP}}$.
   Following the method of conditional expectations, in order to derandomize the choices of the trees $T_i$ in \cref{algline:sample_T} of \cref{alg:ordered_tsp} while maintaining the upper bound on the solution cost, we sequentially choose trees~$S_i$ for $i\in\{1,\ldots, k\}$ such that
   \begin{equation}\label{eq:determininstic_choice}
   S_i = \argmin_{S\in\T_i}\ \Exp\left[g(T_1,\ldots,T_k) \;\middle|\; T_1 = S_1, \ldots, T_{i-1}=S_{i-1}, T_i = S\right]\enspace.
   \end{equation}
   Note that feasibility of the cycle $C$ and the bound of~\eqref{eq:cost_bound} on its cost are unaffected by fixing $T_i=S_i$.
   By definition of conditional expectation, we know that
   \begin{multline*}
      \Exp\left[g(T_1,\ldots,T_k) \;\middle|\; T_1 = S_1, \ldots, T_{i-1}=S_{i-1}\right] \\  = \sum_{S\in\T_i} \mu^i_S \cdot \Exp\left[g(T_1,\ldots,T_k) \;\middle|\; T_1 = S_1, \ldots, T_{i-1}=S_{i-1}, T_i=S\right]\enspace,
   \end{multline*}
   hence the sequence of conditional expectations $(\Exp[g(T_1,\ldots,T_k)\mid T_1 = S_1, \ldots, T_{i}=S_{i}])_{i\in\{1,\ldots,k\}}$ is non-increasing by the choice in~\eqref{eq:determininstic_choice}, because $\sum_{S\in\T_i}\mu_{S}=1$.
   Thus, it remains to observe that the conditional expectations in~\eqref{eq:determininstic_choice} can be computed.
   To this end, observe that
   \begin{multline*}
      \Exp\left[g(T_1,\ldots,T_k) \;\middle|\; T_1 = S_1, \ldots, T_{\ell}=S_{\ell}\right]\\
      = \sum_{i=1}^\ell c(E[S_i]) + \sum_{i=\ell+1}^k \Exp[c(E[T_i])] + \sum_{v\notin\bigcup_{i=1}^\ell V[S_i]} \mathllap{\Prob}\left[v\notin\bigcup_{i=\ell+1}^k V[T_i]\right] c(e_v) + \frac12\cdot c_{\text{LP}}\enspace,
   \end{multline*}
   and we can readily compute
   \begin{equation*}
      \Exp[c(E[T_i])] = \sum_{T\in\T_i} \mu^i_T c(E[T])
      \qquad\text{and}\qquad
      \Prob\left[v\notin\bigcup_{i=\ell+1}^k V[T_i]\right] = \prod_{i=\ell+1}^k (1-y^i_v)\enspace.\qedhere
   \end{equation*}
\end{proof}

\section{Extending to several independent total orders: Proving \texorpdfstring{\cref{thm:indep_orders}}{Theorem \ref{thm:indep_orders}}}
\label{sec:indep_orders}

In this section, we show how our approach can be extended to \TSPPC with a specific structure of precedence constraints that corresponds to having total orders on disjoint subsets $D_1,\ldots,D_\ell\subseteq V$ of the input graph~$G=(V,E)$.

As mentioned in the introduction, our approach is inherently tied to handle total orders---which is why, in the aforementioned setup, our solutions will not interleave vertices from different chains $D_j$, but rather treat the chains $D_j$ one after another.
Still, our approach allows to do better than simply constructing \OTSP solutions for all subinstances $(G, c, D_j)$ in a black-box way and concatenating them with appropriate shortcutting.
The latter would lead to an immediate $(\sfrac32+\sfrac1e)\ell$-approximate solution by using \cref{alg:ordered_tsp} on each subinstance.
Instead, we observe that after solving the \ref{eq:lp_relaxation} and sampling trees for each subinstance as in \cref{alg:ordered_tsp}, we may join all edges obtained this way and only \emph{once} need to connect remaining singletons and do parity correction.
This leads to \cref{alg:indep_orders} as stated below.

Note that, deviating from the above outline, \cref{alg:indep_orders} starts by guessing a root node $d_0$ among the minimal nodes in all sets $D_j$ with respect to $\prec$; this node is used as a common anchor of the given partial orders and results in connectivity of the multigraph containing all sampled trees.
To be able to compare the obtained solution to an optimal solution, we need $d_0$ to be, among the minimal nodes in all sets $D_j$, the first one to appear on an optimal solution.
We remark that for one $j\in\{1,\ldots,\ell\}$, we already have $d_0\in D_j$.
For the sake of uniform notation, we still add a copy of $d_0$ to $D_j$ in \cref{algline:add_d0} of \cref{alg:indep_orders}.

\begin{algorithm2e}[!ht]
   \KwIn{\TSPPC instance $(G, c, \prec)$ on graph $G=(V,E)$, where $\prec$ precisely induces total orders on disjoint subsets $D_1,\ldots,D_\ell\subseteq V$.}
   \medskip

   \DontPrintSemicolon
   Guess a root node $d_0$ among the minimal nodes in $D_i$ with respect to $\prec$.\;
   \ForEach{$j \in \{1, \dots, \ell\}$}
   {
       Compute an optimal solution~$(x^{ji}, y^{ji})_{i\in\{0, 1,\ldots,|D_j|\}}$ to the \ref{eq:lp_relaxation} for the \OTSP instance $(G, c, \{d_0\}\dotcup D_j)$ with an order given by $\prec$ extended by $d_0\prec D_j$.\label{algline:add_d0}\;
       \ForEach{$i \in \{0, 1, \dots, |D_j|\}$}
       {
          Apply \cref{lem:decomposition} to decompose~$(x^{ji},y^{ji})$ into trees~$\T_{ji}$ with weights $\mu^{ji}$. \label{algline:general_decomposition}\;
          Sample one tree~$T_{ji}$ from $\T_{ji}$ with marginals given by~$\mu^{ji}$.\label{algline:general_sample_T}\;
       }
   }
   Compute a minimum-cost edge set $F\subseteq E$ such that the multigraph
   $$H \coloneqq \left(V, F\cup\bigdotcup_{j=1}^\ell\bigdotcup_{i \in \{1, \dots, |D_j|\}} E[T_{ji}]\right)$$
   is connected.\label{algline:general_connector_F}\;
   Let~$Q = \operatorname{odd}(H)$ and compute a minimum cost~$Q$-join~$J$ in~$G$. \label{algline:general_join_computation}\;
   \Return Shortest spanning cycle $C$ in $G$ (over all guesses of $d_0$) that visits $d_0$, $D_1\setminus\{d_0\}$, \ldots, $D_\ell\setminus\{d_0\}$ in this order (while respecting $\prec$ in each $D_i$) and is obtained from~$H\dotcup J$ through \cref{lem:spanning_cycle}.\label{algline:general_H_cup_J}\;
   \caption{Approximating a special case of \TSPPC.}
   \label{alg:indep_orders}
\end{algorithm2e}

We show that this algorithm gives the guarantee claimed by \cref{thm:indep_orders} in expectation, and that it can be derandomized using the method of conditional expectations in a way analogous to the derandomization of \cref{alg:ordered_tsp}.

\begin{proof}[Proof of \cref{thm:indep_orders}]
Let $\costOPT$ denote the cost of an optimal solution of the given \TSPPC instance.
For every~$j\in\{1,\ldots,\ell\}$, note that the value $\costLP^j$ of the optimal solution $(x^{ji}, y^{ji})_{i\in\{1,\ldots,|D_j|\}}$ to the \OTSP instance $(G, c, \{d_0\}\dotcup D_j)$ generated in \cref{algline:add_d0} of \cref{alg:indep_orders} satisfies $\costLP^j\leq\costOPT$.
For every $j\in\{1,\ldots,\ell\}$, denote
$$
H_j \coloneqq\left(V,\bigdotcup_{i=0}^{|D_j|} E[T_{ji}]\right)\enspace.
$$
Every such graph is composed of trees from a connecting tree distribution.
Hence, by \cref{itm:connecting_components} of \cref{lem:tree_cost}, $H_j$ consists of a large connected component that contains a walk with visits at $d_0$ and all vertices of $D_j$ in the desired order, and potentially isolated vertices.
For all $v\notin\{d_0\}\cup D_j$, \cref{itm:isolation_probability} of \cref{lem:tree_cost} implies that
$$
\Prob[v\text{ is isolated in }H_j] \leq \frac1\e\enspace.
$$
Also, observe that
$$
\Exp[c(E[H_j])] = \sum_{i=0}^{|D_j|} \Exp[c(T_{ji})] = \sum_{i=0}^{|D_j|} \sum_{T\in\T_{ji}}\mu^{ji}_T c(E[T]) = \sum_{i=0}^{|D_j|} c^\top x^{ji} = \costLP^j \leq \costOPT\enspace.
$$
Consequently, the multigraph $H_0\coloneqq\bigdotcup_{j\in\{1,\ldots,\ell\}} H_j$ has total edge cost at most $\ell\cdot\costOPT$.
Furthermore, $H_0$ has one large connected component that contains a walk with visits at $d_0$, $D_1\setminus\{d_0\}$, \ldots, $D_j\setminus\{d_0\}$ in this order (obtained by concatenating the walks obtained in the graphs $H_j$ above), i.e., a walk that respects $\prec$.
Also, because the graphs $H_1,\ldots,H_\ell$ are independent,
$$
\Prob[v\text{ is isolated in } H_0] = \prod_{j=1}^\ell \Prob[v\text{ is isolated in } H_j] \leq \frac1{\e^\ell}\enspace.
$$
Hence, by \cref{itm:cost_of_f_prime} of \cref{lem:connector_cost}, the cost of the minimum-cost edge set $F$ connecting $H_0$, as constructed in \cref{algline:general_connector_F} of \cref{alg:indep_orders}, can be bounded as follows:
$$
\Exp[c(F)] \leq \sum_{v\text{ isolated in }H_0} \Prob[v \text{ is isolated in } H_0]\cdot c(e_v) \leq \frac1{\e^\ell} \cdot c(T) \leq \frac1{\e^\ell}\cdot\costOPT\enspace.
$$
Here, we used that for any $j\in\{1,\ldots,\ell\}$, we have $c(T)\leq\costLP^j$ by \cref{item:tree} of \cref{cor:tree_join_cost}, and $\costLP^j\leq\costOPT$ as mentioned above.
Similarly, by \cref{item:join} of \cref{cor:tree_join_cost}, we know that the cost of a cheapest $\odd(H)$-join~$J$ in the multigraph $H=H_0\cup F$ can be bounded by $c(E[J])\leq \frac12\cdot\costLP^j$ for any $j\in\{1,\ldots,\ell\}$, hence $c(E[J])\leq \frac12\cdot\costOPT$.

Altogether, we obtain a connected Eulerian multigraph $H\dotcup J$ together with a walk that has visits at $d_0$, $D_1\setminus\{d_0\}$, \ldots, $D_j\setminus\{d_0\}$ in the order given by $\prec$, and
$$
\Exp[c(E[H\dotcup J])] \leq \left(\ell + \frac12+\frac1{\e^\ell}\right)\cdot\costOPT\enspace.
$$
Thus, by \cref{lem:spanning_cycle}, we can efficiently find a cycle with visits at $d_0$, $D_1\setminus\{d_0\}$, \ldots, $D_j\setminus\{d_0\}$ in the order given by $\prec$ of at most the above expected cost.

Finally, to derandomize the random selection of trees $T_{ji}$ in \cref{alg:indep_orders}, we observe that the present randomized analysis relies on a bound of the form
\begin{equation*}
c(C) \leq \sum_{j=1}^\ell \sum_{i=0}^{|D_j|} c(T_{ji}) + \sum_{v\notin\bigdotcup_{j\in\{1,\ldots,\ell\}}\bigdotcup_{i \in \{1, \dots, |D_j|\}}V[T_{ji}]} c(e_v) + \frac12\cdot\costOPT \enspace.
\end{equation*}
The conditional expectations of this bound with respect to fixing any subset of the trees $T_{ji}$ can be readily computed.
Thus, the derandomization works analogously to \cref{alg:ordered_tsp} by the method of conditional expectations, in each iteration fixing one of the~$T_{ji}$.
To complete the proof of \cref{thm:main_randomized}, we observe that all steps of \cref{alg:indep_orders} can be implemented to run in polynomial time.
\end{proof}

\begin{remark}
We remark that the analysis of \cref{alg:indep_orders} above is with respect to the actual cost $\costOPT$ of an optimal \TSPPC solution.
Alternatively, after guessing a root node $d_0$, one could also write an LP relaxation generalizing the \ref{eq:lp_relaxation} by introducing independent copies of the variables for each chain $\{d_0\}\cup D_j$ and minimizing the cost of a point $x\in\PHK(G)$ that dominates the edge usage within each of the copies.
For the ease of presentation, though, we decided to present the above analysis only.
\end{remark} %
\section{Proof of \texorpdfstring{\cref{lem:decomposition}}{Lemma~\ref{lem:decomposition}}}
\label{sec:decomposition_proof}

As mentioned earlier, we derive \cref{lem:decomposition} from a closely related result used by \textcite[Lemma~4.2]{blauth_2023_improved}.
We restate their result here in a slightly simplified form that follows immediately from the original formulation.

\begin{lemma}[{\cite[Lemma~4.2]{blauth_2023_improved}}]\label{lem:general_decomposition}
Let $G=(V,E)$ be a graph with $r\in V$, let $(x, y)\in\mathbb{R}_{\geq 0}^E\times\mathbb{R}_{\geq 0}^V$ be feasible for the system
\begin{equation}\label{eq:decomposition_system}
\begin{array}{rcll}
x(\delta(v)) & = & 2y_v & \forall v\in V \\
x(\delta(S)) & \geq & 2y_v & \forall S\subseteq V\setminus\{r\}, v\in S\\
y_r & = & 1\enspace,
\end{array}
\end{equation}
and assume that there is a vertex $u\in V\setminus\{r\}$ such that $y_u=1$ and $e_0=\{u, r\}$ satisfies $x_{e_0}\geq 1$.
We can in polynomial time construct a set $\T$ of trees that all contain the vertices $r$ and $u$, and weights $\mu \in [0,1]^\T$ with $\sum_{T\in \T}\mu_T = 1$ and the following properties:
\begin{enumerate}

\item\label{item:solution_partitioning} The point $x\in\mathbb{R}_{\geq 0}^E$ is a conic combination of the trees in $\T$ with weights $\mu$ and the edge $e_0$, i.e.,
$$
x = \sum_{T\in \T}\mu_T\chi^{E[T]} + \chi^{e_0} \enspace.
$$

\item\label{item:incident_trees} For every $v\in V \setminus U$,
$$
\sum_{T\in \T\colon v\in V[T]} \mu_T = y_v \enspace.
$$
\end{enumerate}
\end{lemma}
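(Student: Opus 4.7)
The plan is to derive \cref{lem:general_decomposition} from \cite[Lemma~4.2]{blauth_2023_improved} as the excerpt already foreshadows. The Blauth--Nägele lemma asserts the same style of tree decomposition but is stated in the language of their Prize-Collecting TSP framework, with slightly different bookkeeping for the cut-covering system and for the designated ``seed'' of the decomposition. Our work therefore reduces to two tasks: (i) exhibit a bijection between the hypotheses of the two formulations, and (ii) verify that their conclusion, translated back, yields the two statements of items~\ref{item:solution_partitioning} and~\ref{item:incident_trees}.

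First, I would align the cut-covering systems. The constraints in~\eqref{eq:decomposition_system} normalized by $y_r = 1$ match the fractional $r$-rooted connectivity system used in \cite{blauth_2023_improved}: the degree identities $x(\delta(v)) = 2y_v$ supply the Eulerian-type condition, and the cuts $x(\delta(S))\geq 2y_v$ for $S\subseteq V\setminus\{r\}$, $v\in S$ are the $v$-to-$r$ cut lower bounds. The ``saturated'' vertex $u$ with $y_u = 1$ together with $e_0=\{u,r\}$ satisfying $x_{e_0}\geq 1$ plays the role of the seed pair along which the decomposition is anchored, i.e., the pre-selected incident edge that is peeled off before extracting trees. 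Once these correspondences are written down, the hypothesis of \cite[Lemma~4.2]{blauth_2023_improved} is satisfied verbatim.

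Next, apply the original lemma to obtain a family of trees $\T$ and weights $\mu$. The output of \cite[Lemma~4.2]{blauth_2023_improved} is phrased as a decomposition of $x-\chi^{e_0}$ into trees weighted by $\mu$, where each tree contains both endpoints of the seed edge. Rewriting $x = \sum_{T\in\T}\mu_T\chi^{E[T]} + \chi^{e_0}$ is exactly item~\ref{item:solution_partitioning}. For item~\ref{item:incident_trees}, the original lemma guarantees that the fraction of trees covering any vertex $v$ (other than the seed endpoints $r,u$) equals its $y$-value, which is our claim. Polynomial running time is inherited from the constructive proof in \cite{blauth_2023_improved}.

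The main obstacle is ensuring the translation is complete rather than merely cosmetic. In particular, one must check that the two formulations agree on: the role of the vertex variables $y$ (including the implicit bound $y_v \leq 1$, which in our setting follows from the cut condition at $\{v\}$ using~$y_r=1$); the sign convention under which the seed edge $e_0$ is subtracted; and the status of $u$ (which is covered by every tree, so the identity $\sum_{T\colon v\in V[T]}\mu_T = y_v$ trivially holds for $v=u$ since $y_u = 1$). Once these points are verified, the stated lemma follows immediately, which matches the excerpt's own assessment that it ``follows immediately from the original formulation.''
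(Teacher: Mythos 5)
Your proposal takes the same route as the paper: this lemma is presented there as a restatement of \cite[Lemma~4.2]{blauth_2023_improved}, with no independent proof given, only a brief pointer to the splitting-off technique and a deferral to the original reference for the complete argument, so your ``translate the hypotheses, apply the cited lemma, translate the conclusion'' plan matches exactly. One small slip in your verification checklist: the bound $y_v \leq 1$ does not come from the cut condition at the singleton $\{v\}$ (that cut, combined with the degree identity $x(\delta(v)) = 2y_v$, yields only the tautology $2y_v \geq 2y_v$); instead take $S = V\setminus\{r\}$, which gives $2 = 2y_r = x(\delta(r)) = x(\delta(V\setminus\{r\})) \geq 2y_v$.
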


The proof of \cref{lem:general_decomposition} relies on the well-known \emph{splitting-off} technique (see, e.g., \cite{lovasz_1976_connectivity,mader_1978_reduction,frank_1992_on}) applied in the graph $G$ with weights $x$.
Indeed, the constraints in the system \eqref{eq:decomposition_system} can be interpreted as $r$-$v$ connectivity requirements for all $v\in V\setminus\{r\}$, hence splitting-off allows to remove a vertex from the graph while preserving the connectivity properties of the remaining graph.
An inductive construction of the desired family of trees is then achieved by reverting the splitting-off operations and extending trees appropriately.
For a complete proof, we refer to \textcite{blauth_2023_improved}.

To deduce \cref{lem:decomposition} from \cref{lem:general_decomposition}, we note that a point $(x,y)\in\Pstroll$ can be easily transformed into a point $(x', y')$ satisfying the assumptions of \cref{lem:general_decomposition} by adding one unit to $x_{\{s,t\}}$ and adjusting $y_s$ and $y_t$ accordingly.
Note that intuitively, this corresponds to closing an $s$-$t$ stroll to obtain a tour by adding a copy of the edge $\{s,t\}$.

\begin{proof}[Proof of \cref{lem:decomposition}]
Given $(x,y)\in \Pstroll$, we assume without loss of generality that $e_0\coloneqq\{s,t\}\in E$ and define $x' \coloneqq x + \chi^{\{s,t\}}$ and $y' = y + \frac12(\chi^{s}+\chi^t)$.
We claim that $(x', y')$ with $r=s$ and $u=t$ satisfy the assumptions of \cref{lem:general_decomposition}.
Indeed, $y'_s=y'_t=1$, and $x'_{e_0} = x_{e_0} + 1 \geq 1$.
Moreover, for $v\notin\{s,t\}$, we have $x'(\delta(v)) = x(\delta(v)) = 2y_v$; for $v\in\{s,t\}$, we have $x'(\delta(v)) = x(\delta(v))+1 = 2 = 2y'_v$, hence the degree constraints in~\eqref{eq:decomposition_system} are satisfied.
Finally, to verify that the cut constraints of~\eqref{eq:decomposition_system} are satisfied, too, let $S\subseteq V\setminus \{r\}$ and $v\in S$.
If $t\notin S$, then $x'(\delta(S)) = x(\delta(S))\geq 2y_v = 2 y'_v$ follows from the corresponding constraint of $\Pstroll$.
If otherwise $t\in S$, we know that $x(\delta(S))\geq 1$, hence
$$
x'(\delta(S)) = x(\delta(S)) + 1 \geq 2 \geq 2y'_v\enspace,
$$
where we use that $y'_v=y_v\leq 1$ is implied by the constraints of $\Pstroll$ for $v\in V\setminus\{s,t\}$ (see \cref{fn:stroll_coverage_variables}), and $y'_s=y'_t=1$.

Consequently, by applying \cref{lem:general_decomposition} to $(x',y')$, we obtain in polynomial time a set $\T$ of trees that all contain $s$ and $t$, and weights $\mu \in [0,1]^\T$ with $\sum_{T\in\T} \mu_T=1$ such that
$$
x + \chi^{e_0} = x' = \sum_{T\in \T}\mu_T\chi^{E[T]} + \chi^{e_0} \enspace,
$$
i.e., $x=\sum_{T\in \T}\mu_T\chi^{E[T]}$, and, for every $v\in V \setminus \{s,t\}$,
\begin{equation*}
\sum_{T\in \T\colon v\in V[T]} \mu_T = y'_v = y_v \enspace.\qedhere
\end{equation*}
\end{proof}

\paragraph{Acknowledgement.} We thank Tobias M\"{o}mke for bringing the Ordered Traveling Salesperson Problem to our attention during the Aussois Combinatorial Optimization Workshop 2024.
\begingroup
\setlength{\emergencystretch}{1em}
\printbibliography
\endgroup

\end{document}